\newtheorem{theorem}{Theorem}
\newtheorem*{theorem*}{Theorem}
\newtheorem{lemma}[theorem]{Lemma}
\newtheorem*{lemma*}{Lemma}
\newtheorem{conjecture}[theorem]{Conjecture}
\newtheorem{definition}[theorem]{Definition}
\DeclareMathOperator{\tr}{tr}
\DeclareMathOperator{\vectorize}{vec}
\DeclareMathOperator{\choi}{choi}
\title{\large\bfseries Apparent Universal Behavior in $2^{\text{nd}}$ Moments of Random Quantum Circuits \vspace{-0.2cm}}
\author[1]{Daniel Belkin} 
\author[1,2]{James Allen} 
\author[1]{Bryan K. Clark}
\affil[1]{\footnotesize Institute for Condensed Matter Theory and IQUIST and NCSA Center for Artificial Intelligence Innovation and Department of Physics, University of Illinois at Urbana-Champaign, IL 61801, USA}
\affil[2]{\footnotesize D\'epartement de Physique, Universit\'e de Montr\'eal, Montr\'eal, QC, Canada H3C 3J7}
\date{}
\begin{document}

\maketitle

\begin{abstract}
    Just how fast does the brickwork circuit form an approximate 2-design? 
    Is there any difference between anticoncentration and being a 2-design? 
    Does geometry matter?   
    How deep a circuit will I need in practice?
    We tell you everything you always wanted to know about second moments of random quantum circuits, but were too afraid to compute. Our answers generally take the form of numerical results for up to 50 qubits.
    
    Our first contribution is a strategy to determine explicitly the optimal experiment which distinguishes any given ensemble from the Haar measure. With this formula and some computational tricks, we are able to compute $t = 2$ multiplicative errors exactly out to modest system sizes. As expected, we see that most families of circuits form $\epsilon$-approximate $2$-designs in depth proportional to $\log n$. For the 1D brickwork, we work out the leading-order constants explicitly. Our semi-empirical formula for the approximate $2$-design depth takes the form
    $$ d_\text{brickwork} \approx\frac{\log \left(\frac{3}{ \pi^2}\frac{n}{\epsilon}\right)}{\log \frac{5}{4}} + O\left(\frac{1}{n^2}\right)$$
    For graph-sampled architectures, we find some exceptions which are much slower, proving that they require at least $\Omega(n)$ gates per site. This answers a question asked by ref. \cite{mittal_local_2023} in the negative. We explain these  exceptional architectures in terms of connectedness, corresponding loosely to a separation of timescales. Based on this intuition we conjecture universal upper and lower bounds for graph-sampled circuit ensembles. 
    
    For many architectures, the optimal experiment which determines the multiplicative error corresponds exactly to the collision probability (i.e. anticoncentration). However, we find that the star graph anticoncentrates much faster than it forms an $\epsilon$-approximate $2$-design. Finally, we show that one needs only ten to twenty layers to construct an approximate $2$-design for realistic parameter ranges. This is a large constant-factor improvement over previous constructions. We show that the parallel complete-graph architecture is not quite the fastest scrambler, partially resolving a question raised by ref. \cite{dalzell_random_2022}.
\end{abstract}

\section{Introduction}
The convergence of random circuit ensembles to approximate unitary designs has been the subject of much study. Areas of interest include both designing ensembles which give approximate unitary designs especially quickly\cite{cui_random_2025, suzuki_more_2025, schuster_random_2025, laracuente_approximate_2024} and determining the rate at which simpler or more generic random circuit architectures approximate global random unitaries.\cite{ambainis_quantum_2007, brandao_local_2016, harrow_approximate_2023, belkin_approximate_2024, mittal_local_2023, chen_incompressibility_2024} The rate of convergence is a natural and useful property to understand, since it controls the large-depth behavior of all other experimentally-observable properties. 

Certain structured architectures are known to form $\epsilon$-approximate $t$-designs in depth $\Theta(\log n)$, where $n$ is the number of sites. With even more structure, including ancilla qubits and non-Haar-random local operations, one can show that a design is formed in depth $\Theta(\log \log n)$. On the other hand, for more generic regularly-connected arrangements of Haar-random local gates, all that is known is that the $\epsilon$-approximate $t$-design depth lies somewhere between $\Omega(\log n)$ and $O(n)$ \cite{dalzell_random_2022, belkin_approximate_2024}.

Our first contribution is a reduction of the $t = 2$ multiplicative error to a relatively tractable mathematical formula.  While previous work has computed 
 upper or lower bounds on the multiplicative error from other circuit properties (such as the spectral gap up to $t = 6$) those bounds are believed to be very loose.\cite{brandao_local_2016, allen_conditional_2025}.  This formula allows the multiplicative error to be computed exactly for many common architectures in time $4^n$ (or even $2^n$ when certain symmetries are present). This is a dramatic improvement over the $64^n$ runtime encountered by a naive strategy. This makes this approach, to our knowledge, the first usable algorithm for evaluating multiplicative error.  Our second contribution is a set of numerical results obtained with this algorithm. 
 
Section \ref{sec:graphs} covers architectures involving Haar-random 2-qubit gates in locations sampled uniformly from the edges of some fixed graph over the sites. This class of architectures has been studied extensively, with bounds ranging from circuit size $O(n^2)$ (i.e. depth $O(n)$) for graphs with convenient structure to $O(n^9 \log n)$ for arbitrary graphs \cite{chen_incompressibility_2024, ambainis_quantum_2007, brandao_local_2016, mittal_local_2023, oszmaniec_epsilon-nets_2022}. We focus on two key open questions: 
First, do any graphs form $2$-designs in sublinear depth? Yes. We find empirically that several typical families of graphs appear to require $\Theta(\log n)$. Second, as posed by ref. \cite{mittal_local_2023}: Does every choice of graph form a $2$-design at the same asymptotic rate? No. We give families of graphs which can be proven to require depth at least $\Omega(n)$. However, motivated by ref.~\cite{belkin_approximate_2024}, we show that these counterexamples are in a certain sense poorly connected. Indeed, all graphs we examine require \textit{at least $\Omega(\log n)$ gates} and \textit{at most $O(\log n)$ connections} in order to form approximate $2$-designs. We further conjecture that the complete and linear graphs are extremal on these respective measures (illustrated in Figure \ref{fig:graph_conjectures}). 

Section \ref{sec:brickwork} discusses the 1D brickwork architecture. The architectures of refs \cite{schuster_random_2025, laracuente_approximate_2024} which are known to scramble in depth $\Theta(\log n)$ are ``censored brickworks'', i.e. the 1D brickwork with certain random gates fixed to the identity. It seems intuitive that adding additional random unitaries to the middle of a circuit shouldn't usually make the circuit further from the Haar measure, and so one expects the 1D brickwork to also form a 2-design in depth at most $O(\log n)$. This intuition, however, is known to be false in at least some cases \cite{belkin_absence_2025}. Nonetheless, we find that the brickwork behaves as expected. In particular, we give a semi-empirical formula with leading behavior
\begin{gather}
    d_{\textrm{brickwork}} \sim \frac{\log \left(\frac{3}{ \pi^2}\frac{n}{\epsilon}\right)}{\log \frac{5}{4}}
\end{gather}
This formula is in practice quite close to the true behavior (see Figure \ref{fig:brickwork_nscaling}). 

Another important open question is which architectures scramble fastest in practice. Ref. \cite{dalzell_random_2022} suggested that the architecture we term the Parallel Complete Graph (see Section \ref{sec:fast_definitions}) might be the ``fastest anticoncentrator.'' We show that although it is not quite the fastest scrambler, it is much faster than any graph-sampled architecture. We give a construction of an architecture which we can show forms an $0.01$-approximate $2$-design on 50 qubits with only 12 layers. The depth needed looks nearly independent of qubit count over numerically accessible sizes. We suggest other constructions which seem likely to scramble even faster. 

Finally, we discuss the relationship between anticoncentration and approximate $2$-design-ness. Our Theorem \ref{thm:multerr_expdef} gives a convenient conceptual connection between the two. Anticoncentration is essentially a weaker form of convergence to the Haar measure. It is known that certain circuit architectures anticoncentrate in depth $\Theta(\log n)$, and it's also known that other similar architectures form approximate $2$-designs in depth $\Theta(\log n)$, which suggests that anticoncentration might be only slightly weaker than approximate $2$-design-ness in practice. Indeed, ref.~\cite{heinrich_anti-concentration_2026} shows that for \textit{state} $2$-designs, the two are closely related. We show that the case of \textit{unitary} designs is somewhat more complicated. We find that the two measures of convergence are exactly equal in many cases, but we give exceptional examples in which they differ dramatically. 

\section{Theory}
\subsection{Basics}
Suppose we have $n$ sites, each with a local Hilbert space of dimension $q$. We have some distribution $\varepsilon$ over the unitary group. The $2$\textsuperscript{nd} moment operator of this distribution is a quantum channel given by 
\begin{gather}
        \Phi_\varepsilon(\rho) = \mathbb{E}_{U \sim \varepsilon}\left[(U^\dagger \otimes U^\dagger) \! \rho (U \otimes U)\right]
\end{gather}
We will also make use of the vectorization map, under which
\begin{gather}
        \vectorize(\Phi_\varepsilon) = \mathbb{E}_{U \sim \varepsilon}\left[(U^* \otimes U^* \otimes U \otimes U)\right]
\end{gather}

The \textbf{multiplicative error} $\mathcal{M}(A,B)$ of a channel $A$ relative to a second channel $B$ is defined to be the smallest \(\epsilon\) such that $(1 + \epsilon) B - A$ and $A - (1 - \epsilon) B$ are both completely positive maps \cite{chen_incompressibility_2024}. 
Equivalently, consider applying the channel $A$ to a state $\rho$ and measuring a projector $\Pi$ which accepts with probability $\tr \left(\Pi A(\rho)\right)$, and likewise for $B$. Then we may write
\begin{gather}
    \label{eq:multerr_expdef}
    \mathcal{M}(A,B) = \max_{\rho, \Pi} \left |\frac{\tr \left(\Pi \left[A \otimes I\right](\rho)\right)}{\tr \left(\Pi \left[B \otimes I\right](\rho) \right)} - 1\right |
\end{gather}
In other words, the multiplicative error is a statement about the \textit{best-case experiment} for distinguishing the two channels. The ratio of Eq \ref{eq:multerr_expdef} is precisely the largest likelihood ratio obtainable from any single observed event. 
An \textbf{$\epsilon$-approximate 2-design} is an ensemble $\varepsilon$ whose $2$\textsuperscript{nd} moment channel $\Phi_\varepsilon$ has a multiplicative error of at most $\epsilon$ with that of the Haar measure over the global Hilbert space, i.e. 
\begin{gather}
    \mathcal{M}(\Phi_\varepsilon, \Phi_\text{Haar}) \leq \epsilon
\end{gather}
Approximate designs are also often defined in terms of other error metrics, but we will focus on the multiplicative error here. 

\subsection{Constraining the Optimal Experiment}
We will require that the ensemble $\mathcal{E}$ 
\begin{itemize}
    \item Is invariant under the action of single-site unitaries (\textbf{local invariance})
    \item Gives rise to a 2\textsuperscript{nd} moment operator whose vectorization is positive-semidefinite. (\textbf{PSD vectorization})
\end{itemize}

The second condition is a bit tricky to interpret. However, it can be shown to hold for many ensembles of interest, such as graph-sampled circuits or the 1D brickwork at odd depths (see Appendix~\ref{app:psd_circuits}). Furthermore, given a locally invariant ensemble $\mathcal{E}$, one may define an ensemble $\mathcal{E}'$ with a PSD vectorization by sampling $U V^\dagger$, with $U,V$ drawn i.i.d. from $\mathcal{E}$. 

\begin{theorem}
    \label{thm:multerr_expdef}
    Let $\Phi_\varepsilon$ be the $2$nd moment operator of a locally invariant distribution over $\mathcal{U}\left(q^n\right)$ with a PSD vectorization. Define 
    \begin{gather}
    \ket{\psi(x)} =  \begin{cases} \ket{0 0}  & x = 0 \\
   \frac{1}{\sqrt{2}} \left(\ket{01} - \ket{10}\right) & x = 1
    \end{cases}
    \end{gather}
    The multiplicative error between $\Phi_\varepsilon$ and the 2nd moment operator $\Phi_{\text{Haar}}$ of the Haar distribution over $\mathcal{U}\left(q^n\right)$, as given by the maximization in Equation \ref{eq:multerr_expdef}, is saturated by the choice 
    \begin{gather}
        \rho = \Pi = \bigotimes_i \ket{\psi(a_i)}\bra{\psi(a_i)}
    \end{gather}
    for some $\vec{a} \in \{0,1\}^{n}$. 
    In other words, 
    \begin{gather}
        \label{eq:experimental_multerr_hermitian}
        \mathcal{M}\left(\Phi_\varepsilon, \Phi_\text{Haar}\right) = \max_{\vec{a} \in \{0,1\}^{n}} \frac{\tr \left[\rho_{\vec{a}} \Phi_\varepsilon\left(\rho_{\vec{a}}\right)\right]}{\tr \left[\rho_{\vec{a}} \Phi_\text{Haar}\left(\rho_{\vec{a}}\right)\right]} - 1
    \end{gather}
\end{theorem}
A proof is given in Appendix~\ref{app:multiplicative_errors}. This theorem replaces the maximum over all possible experiments in Equation \ref{eq:multerr_expdef} with a maximum over a finite set of possibilities. Note that the collision probability, often used to define anticoncentration \cite{dalzell_random_2022}, corresponds to the choice $\vec{a} = \vec{0}$. This relationship is discussed in more detail in Section \ref{sec:anticoncentration}.

\subsection{Explicit Form for Numerics}
Consider the $t$\textsuperscript{th} moment of an ensemble $\mathcal{E}$ on $n$ sites of local Hilbert space dimension $q$.  Local invariance of $\mathcal{E}$ implies that the vectorized moment operator $\vectorize \Phi_{\mathcal{E}}$ involves a projection into the commutant of $\mathcal{U}(q)$ on each site. This fact makes our computations a bit simpler. By Schur-Weyl duality, the commutant is spanned by states labeled by permutations. We term these \textbf{permutation basis states}, explicitly
\begin{gather}
\ket{\sigma} = \frac{1}{\sqrt{q}^t} \sum_{\vec{i} \in \{1...q\}^t} \ket{\vec{i}} \otimes \ket{\sigma(\vec{i})}
\end{gather}
Here the permutation acts by permuting the order of the elements of $\vec{i}$ \cite{allen_conditional_2025}. For $t = 2$ the only permutations are identity and swap, so the dimension of the local commutant is always $2$. To obtain our numerical results, we express the moment operator in this basis. This corresponds to finding coefficients $H_{\vec{\sigma}, \vec{\tau}}$ such that
\begin{gather}
    \vectorize\left(\Phi_\varepsilon - \Phi_\text{Haar}\right)\ket{\sigma_1...\sigma_n} = \sum_{\tau_1...\tau_n} H_{\tau_1...\tau_n, \sigma_1...\sigma_n}\ket{\tau_1...\tau_n}
\end{gather}
If one then defines 
\[\mathbf{v}(\vec{a}) = \bigotimes_i \begin{bmatrix} 1 \\ (-1)^{a_i} \end{bmatrix}\]
we end up with 
\begin{align}
    \mathcal{M}\left(\Phi_\varepsilon, \Phi_\text{Haar}\right) &= \frac{1}{2} \frac{1}{\left(1 + \frac{1}{q}\right)^n} 
\max_{\vec{a} \in \{0,1\}^{n}} \left[\left(1 + (-1)^{\sum_i a_i}q^{-n} \right)
\left(\frac{1 + \frac{1}{q}}{1 - \frac{1}{q}}\right)^{\sum_i a_i}
\mathbf{v}(\vec{a})^T H \mathbf{v}(\vec{a})
\right]
\\& \sim 
\left(\frac{2}{3}\right)^n 
\max_{\vec{a} \in \{0,1\}^{n}} \left[3^{\sum_i a_i}
 \mathbf{v}(\vec{a})^T H \mathbf{v}(\vec{a}) 
\right]
\end{align}
where in the second line we've taken $q = 2$ and dropped $e^{-O(n)}$ contributions to emphasize the key structure of the formula. 

In this work we evaluate $\mathbf{v}(\vec{a})^T H \mathbf{v}(\vec{a})$ exactly using tensor network methods \cite{braccia_computing_2024}. It is also possible in principle to approximate Equation \ref{eq:experimental_multerr_hermitian} more directly by sampling random Clifford gates, since the Clifford group is a 2-design. However, in practice we have found that this converges poorly. Circuits composed of Haar-random local unitaries tend to self-average quite well, such that only a small number of samples are needed to estimate $\tr \left[\rho_{\vec{a}} \Phi_\varepsilon\left(\rho_{\vec{a}}\right)\right]$. Although Clifford circuits have the same second moments, the more discrete distributions require a much larger number of samples for averages to converge. Another approach is to instead contract the tensor network approximately, e.g. with belief propagation, by exploiting positivity bias, or by mapping to a stat mech model and running Monte Carlo simulations. While these approaches would be useful for larger numbers of qubits, exact contraction is adequate to address the questions at hand here.

\section{Graphs}
\label{sec:graphs}
We now consider graph-sampled architectures, in which a Haar-random 2-site unitary is applied to a random pair of sites chosen from the uniform distribution over the edges of some specified connectivity graph. The moment operator is averaged over both the possible circuit structures and the possible values of each local unitary. Our core goal in this section is to understand which structural properties of a graph cause especially fast or slow scrambling.

\subsection{Prior Work}
Ref.~\cite{ambainis_quantum_2007} established an approximate \(2\)-design size of at most \(O(n^2)\) gates for the complete graph. Ref.~\cite{brandao_local_2016} found \(O(n^2)\) for the linear graph. Ref.~\cite{oszmaniec_epsilon-nets_2022} used a similar strategy for graphs which admit a Hamiltonian path to obtain a bound scaling as \(O(n^3)\). Ref.~\cite{mittal_local_2023} considers graphs with \(|E|\) edges, bounded degree, and bounded effective spanning-tree height, obtaining an \(O(|E|n)\) bound for this case. Without any structural assumptions, the best available bound comes from the results of ref.~\cite{chen_incompressibility_2024}, which imply a bound of 
\[O\left(n^9 \log n\right)\]
gates for arbitrary graphs. Ref. \cite{belkin_approximate_2024}, meanwhile, proves that the approximate $t$-design depth is related to the number of connected blocks into which the typical realization can be divided. It also proposes conjectures based on this approach which would imply an $O(n^3 \log n)$ bound. 

\subsection{Error vs. Circuit Size}
Figure \ref{fig:err_vs_gates} shows multiplicative error vs. circuit size for linear, circle, complete, and lollipop graphs (see Figure \ref{fig:slow_graphs}b for the definition of the lollipop graph). In all figures we choose local dimension $q = 2$. 

\begin{figure}[H]
    \centering
    \begin{tikzpicture}
        \begin{scope}
            \node[anchor=north west,inner sep=0] (image_a) at (0,0)
            {\includegraphics[width=0.6\columnwidth]{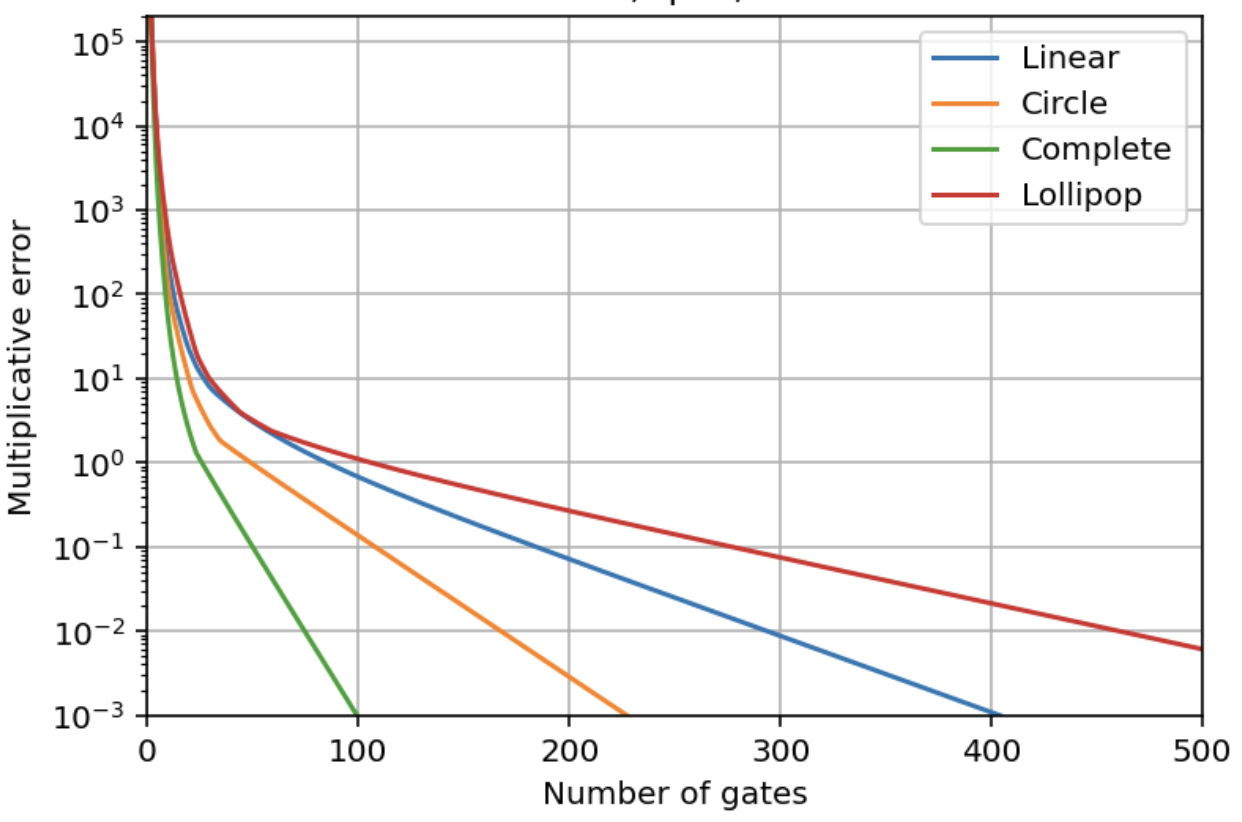}};
        \end{scope}
    \end{tikzpicture}
    \vspace*{-0.4cm}
    \caption{Multiplicative error $\mathcal{M}$ at $t = 2$ vs. gate count $s$ for four graphs on 12 qubits.}
    \label{fig:err_vs_gates}
\end{figure}

These curves share a common structure: A very rapid initial drop in $\epsilon$, followed by a uniform exponential decay. To understand this, let $\lambda_i$ be the (unique, ordered) eigenvalues of the vectorized single-step moment operator $\vectorize \left(\Phi_\varepsilon\right)$ and let $P_i$ project into the corresponding eigenspaces. Then at circuit size $s$, we may write
\begin{gather} 
    \mathcal{M} = \max_{\vec{a}} \sum_{i > 0} \frac{\left|\left|P_i \vectorize (\rho_{\vec{a}})\right|\right|^2}{\left|\left|P_0 \vectorize (\rho_{\vec{a}})\right|\right|^2}\lambda_i^s \label{eq:mult_error_over_depth}
\end{gather}
(see Appendix~\ref{app:scaling_depth} for details). At large depths only the dominant eigenvalue $\lambda_1$ matters, which contributes the straight lines to Figure \ref{fig:err_vs_gates}. Each of these lines is of the form
\begin{gather}
    \log \mathcal{M} \approx s \log \lambda_1 
    + \max_{\vec{a}} \log
    \frac{\left|\left|P_1 \vectorize (\rho_{\vec{a}})\right|\right|^2}{\left|\left|P_0 \vectorize (\rho_{\vec{a}})\right|\right|^2}
\end{gather}
In other words, the small-$\epsilon$ behavior is determined by the norm of the projection of the optimal $\vectorize \rho_{\vec{a}}$ into the dominant eigenspace. Early work on approximate $t$-designs was based on determining the multiplicative error using only the spectral gap, which corresponds to assuming that the dominant experiment $\vectorize \left(\Phi_\varepsilon\right)$ lies entirely in the dominant eigenspace: 
\begin{gather}
    \log \mathcal{M} \leq s \log \lambda_1 + \max_{\vec{a}} \log\frac{\left|\left|\vectorize (\rho_{\vec{a}})\right|\right|^2}{\left|\left|P_0 \vectorize (\rho_{\vec{a}})\right|\right|^2}
\end{gather}
%\nt{Optional: Work out argument of log explicitly.}
In practice this estimate seems to be quite loose. This is the same as approximating the curves in Figure \ref{fig:err_vs_gates} as straight lines, with the initial values and final slopes unchanged but without the ``elbows'' on the left side of the plot. The rapid early drop corresponds to subdominant eigenspaces. The drops are large and fast, which indicates that most of the norm of the dominant irrep lies in eigenspaces with eigenvalues much smaller than $\lambda_1$.

\subsection{Critical Depth vs. Qubit Count}
\label{sec:graphs_by_nqubit}
Figure \ref{fig:basic_depths} shows the circuit size needed to reach multiplicative error $0.01$ for various graphs and system sizes. 
\begin{figure}[H]
    \centering
    \begin{tikzpicture}
        \begin{scope}
            \node[anchor=north west,inner sep=0] (image_a) at (0,0)
            {\includegraphics[width=0.6\columnwidth]{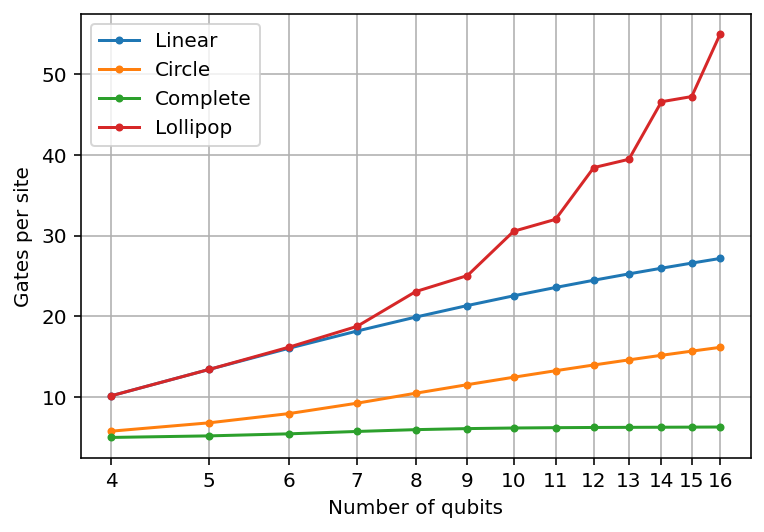}};
        \end{scope}
    \end{tikzpicture}
    \vspace*{-0.4cm}
    \caption{Circuit size needed to reach an $0.01$-approximate $2$-design for linear, circle, complete, and lollipop graphs.}
    \label{fig:basic_depths}
\end{figure}
We see that the linear and circle graphs both give roughly straight lines on this plot, which is to say they form approximate $2$-designs in depth\footnote{Strictly speaking the depth of graph-sampled architectures is not well-defined since it depends on the realization. Here we presume that the typical ``depth'' is proportional to the number of gates per site.} $O(\log n)$. On the other hand, the complete graph appears nearly flat in comparison. There is a lower bound of depth $\Omega(\log n)$ for the complete graph via anticoncentration\cite{dalzell_random_2022}, but this behavior is difficult to discern from numerically-accessible system sizes.\footnote{Fig. \ref{fig:fast} shows more data for the complete graph and compares it against the lower bound of Ref.~\cite{dalzell_random_2022}.} The lollipop line curves upwards. This suggests strongly that not all graphs scale at the same asymptotic rate. This is discussed in detail in Section~\ref{sec:lollipop} below. 

Figure \ref{fig:fancy_depths} gives analogous curves for some other families of graphs. Generally we see more dense graphs tend to form approximate $2$-designs faster, with both trees and Ramanujan graphs appearing to interpolate between the linear and complete cases as the degree of the nodes increases. The lollipop is our only exception to this trend. 

\begin{figure}[H]
    \centering
    \begin{tikzpicture}
        \begin{scope}
            \node[anchor=north west,inner sep=0] (image_a) at (0,0)
            {\includegraphics[width=0.4\columnwidth]{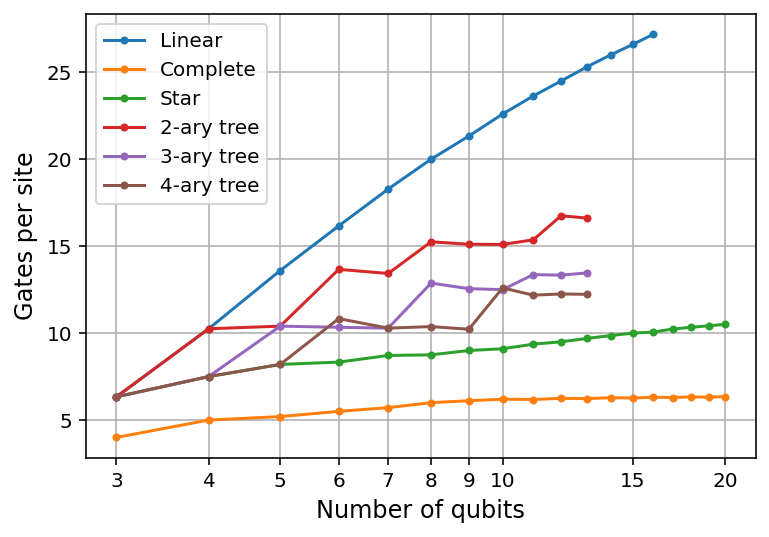}};
        \end{scope}
        \node [anchor=north west] (note) at (-0.2,0) {\small{\textbf{a)}}};
        \begin{scope}
            \node[anchor=north west,inner sep=0] (image_b) at (7.5,0)
            {\includegraphics[width=0.4\columnwidth]{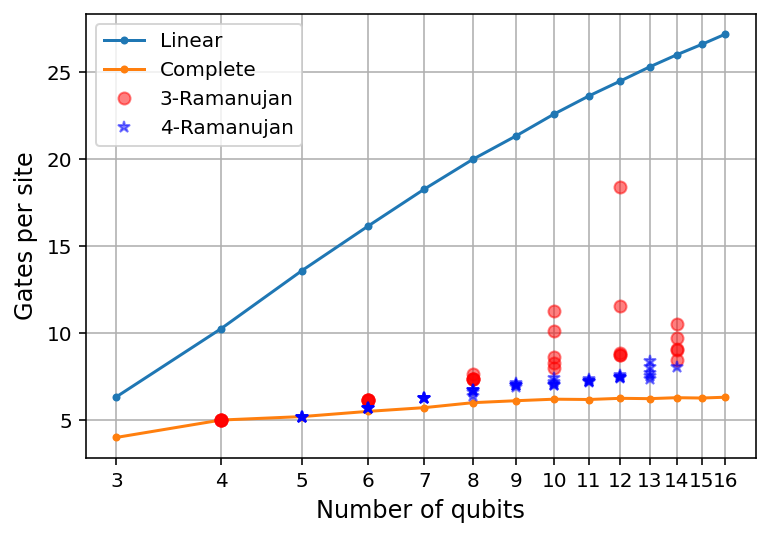}};
        \end{scope}
        \node [anchor=north west] (note) at (7.3,0) {\small{\textbf{b)}}};
    \end{tikzpicture}
    \vspace*{-0.4cm}
    \caption{Circuit size needed to reach an $0.01$-approximate $2$-design for some other families of graphs. Results for complete and linear graph are repeated for reference. (a) Tree and star graphs. (b) Several random $d$-regular Ramanujan graphs.}
    \label{fig:fancy_depths}
\end{figure}

\subsection{Connectedness}
\begin{figure}[h]
    \centering
    \begin{tikzpicture}
        \begin{scope}
            \node[anchor=north west,inner sep=0] (image_a) at (0,0)
            {\includegraphics[width=0.45\columnwidth]{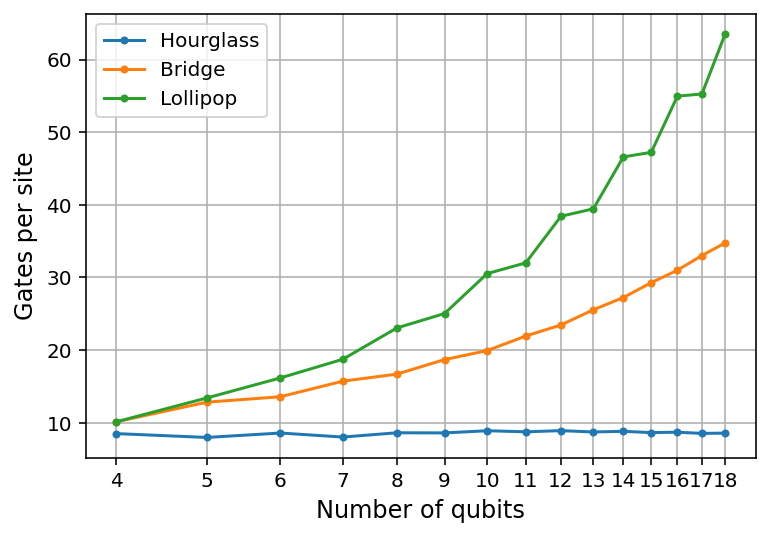}};
        \end{scope}
        \node [anchor=north west] (note) at (-0.1,-0.3) {\small{\textbf{a)}}};
        \begin{scope}
            \node[anchor=north west,inner sep=0] (image_b) at (8.9,-0.33)
            {\includegraphics[width=0.35\columnwidth]{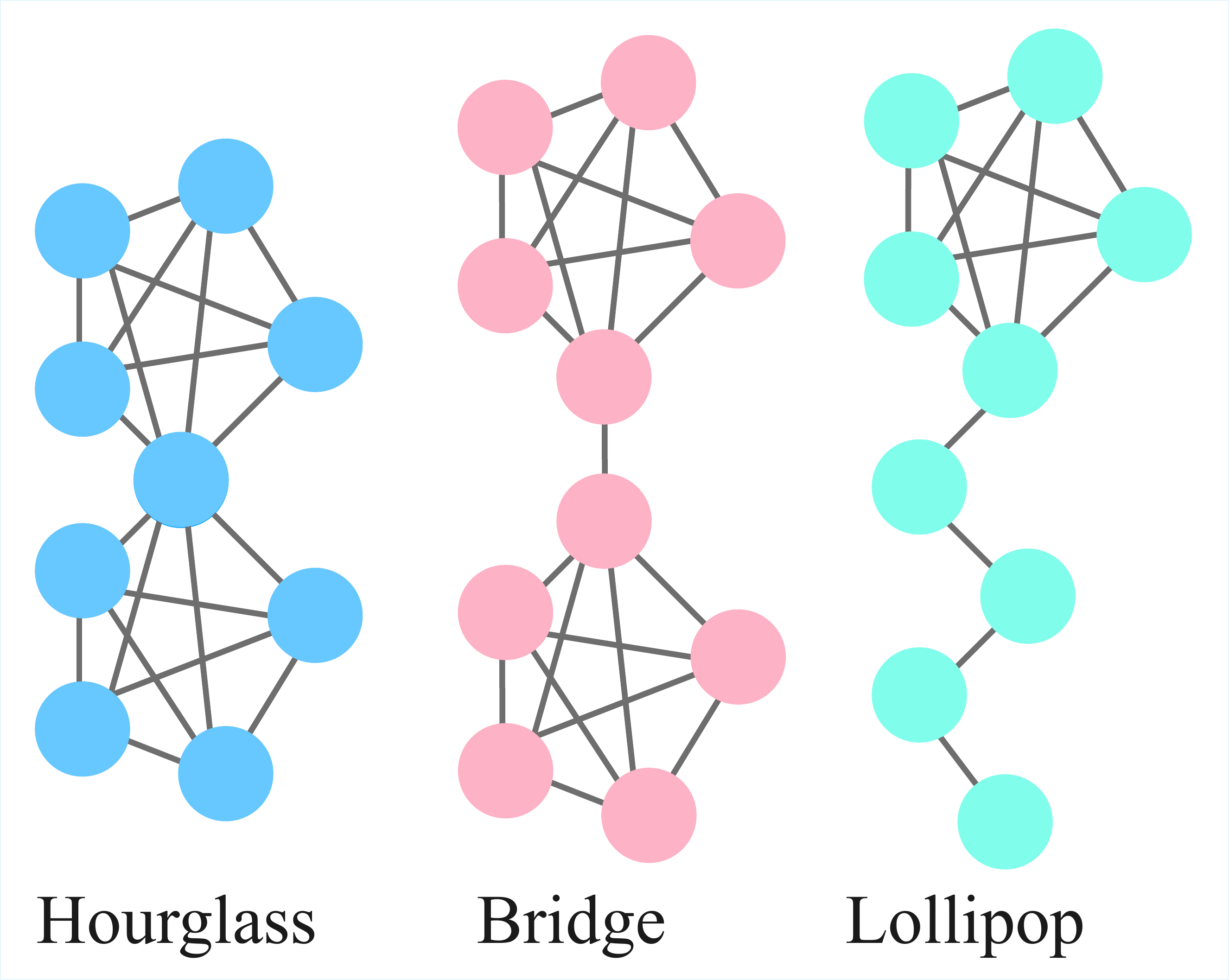}};
        \end{scope}
        \node [anchor=north west] (note) at (9,-0.3) {\small{\textbf{b)}}};
    \end{tikzpicture}
    \vspace*{-0.4cm}
    \caption{(a) $0.01$-approximate $2$-design depths for each of three families of graphs. Although the hourglass and bridge look very similar, their scrambling rates are very different. (b) Illustrations of the hourglass, bridge, and lollipop graph families. In each case we assign $\lceil \frac{n}{2} \rceil$ nodes to the upper clique, such that the two regions are of roughly equal size.}
    \label{fig:slow_graphs}
\end{figure}

\subsubsection{Why is the lollipop special?} \label{sec:lollipop}
We saw in Figures \ref{fig:basic_depths} and \ref{fig:fancy_depths} that all of these architectures lie somewhere between the linear and the complete graph except for one. The lollipop graph is not only much slower to scramble than the other architectures shown, this gap increases rapidly with $n$. To understand this behavior, recall that we are choosing gate locations uniformly from all the edges of the graph. The lollipop has ${n/2 \choose 2} = O(n^2)$ edges in the ``candy'', but only $\frac{n}{2}$ edges in the ``stick''. It follows that the vast majority of random gates we draw will act in the candy, with only a fraction $O\left(\frac{1}{n}\right)$ helping to scramble the stick. The stick resembles a linear graph. As we saw above, the linear graph requires $O(n \log n)$ gates to scramble, and so we should expect the lollipop to require $O(n^2 \log n)$ gates before the stick becomes well-scrambled. 

To build some more intuition, consider two other families of graphs. The \textbf{hourglass graph} is two cliques which share a single node. The \textbf{bridge graph} is two cliques connected by a single edge. These two geometries are extremely similar to each other, as illustrated in Figure \ref{fig:slow_graphs}b. And yet we see in Figure \ref{fig:slow_graphs}a that these two very similar architectures have radically different scrambling speeds. Why? In the bridge architecture, information can scramble very well within each clique. But the rate of scrambling \textit{between} cliques is bottlenecked by the bridge itself, which occurs only once every $O\left(\frac{1}{n^2}\right)$ gates. This behavior is illustrated in Figure \ref{fig:connections_slow}a. The hourglass has no such bottleneck. This explains the large difference in scrambling rates between two otherwise similar architectures. 

\begin{figure}[H]
    \centering
    \begin{tikzpicture}
        \begin{scope}
            \node[anchor=north west,inner sep=0] (image_b) at (0,0)
            {\includegraphics[width=0.6\columnwidth]{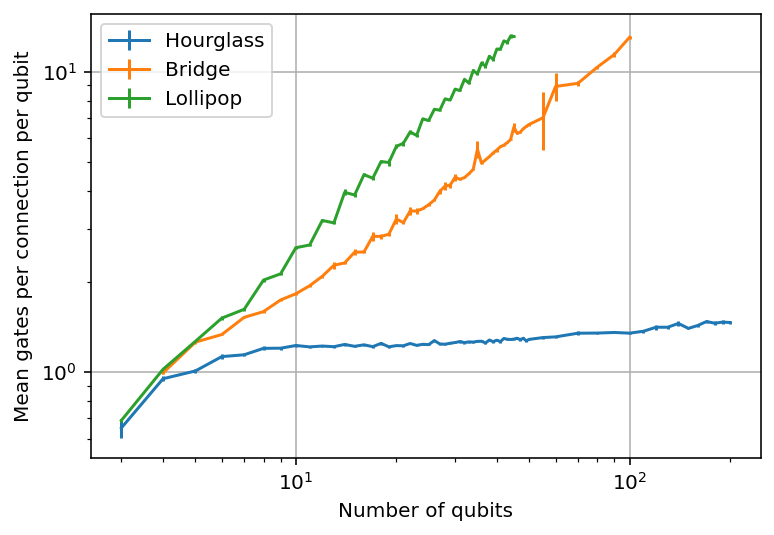}};
        \end{scope}
    \end{tikzpicture}
    \vspace*{-0.4cm}
    \caption{Mean gates per connection per qubit for each of three architectures, as estimated by the greedy algorithm described in Appendix \ref{app:connections_of_architectures}. We see that the hourglass is regularly-connected even at large $n$, while the bridge and lollipop become poorly connected as $n$ grows.}
    \label{fig:connections_slow}
\end{figure}

There is a physical interpretation for this behavior. We can think of the action of random local unitaries as being similar in spirit to the evolution of a physical system under a ``generic'' (e.g. chaotic) local Hamiltonian. The behavior of the lollipop is just a separation of timescales: The head experiences strong interactions and thermalizes quickly, while the tail experiences only very weak interactions and so thermalizes very slowly. Similarly, the two ``islands'' of the bridge graph are quite quick to thermalize internally, but the exchange of quantum information between the two is very slow. This resembles prethermalization of two weakly-interacting subsystems to independent temperatures. 

Motivated by ref.~\cite{belkin_approximate_2024}, we suggest a unifying description of the behavior of the lollipop and bridge. Theorem 3 of that work establishes a bound on the spectral gaps of random circuits in terms of the number of connected blocks into which they can be divided. The hourglass is connected after $\Theta(n \log n)$ gates, while the bridge requires $\Theta(n^2)$ gates and the lollipop $\Theta(n^2 \log n)$ gates.\footnote{Each clique is of size $\frac{n}{2}$, so by percolation are connected after $\Theta(n \log n)$ gates. For the hourglass this is sufficient to ensure the whole graph is connected. For the bridge graph, however, we also need the bridge itself to be sampled, which occurs only once in every $2 {n/2 \choose 2} + 1$ gates. For the lollipop, we have percolation in the candy in $\Theta(n \log n)$, but the coupon collector problem in the stick requires $\Theta(n^2 \log n)$ gates.} These asymptotics suggest an explanation for the differences seen in Figure \ref{fig:slow_graphs}a. 

In fact, this intuition can be formalized. 
\begin{theorem}
    \label{thm:disconnected}
    The bridge graph on $n$ sites requires at least 
    $s \geq \frac{n(n-2)}{4} \log \frac{1}{\epsilon}$
    gates in order to form a multiplicative-error $\epsilon$-approximate $t$-design. 
\end{theorem}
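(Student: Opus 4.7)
The plan is a two-step argument combining a bridge-event decomposition with the experimental characterization of Theorem~\ref{thm:multerr_expdef}. Since the bridge graph has exactly $m = \frac{n(n-2)}{4}+1$ edges, each gate lands on the bridge with probability $p = 1/m$, so the probability that none of the $s$ gates has ever used the bridge is $(1-p)^s$. Decompose the ensemble accordingly,
\begin{gather*}
\Phi^s_\varepsilon = (1-p)^s\,\Phi^s_{\text{NB}} + \bigl(1-(1-p)^s\bigr)\Phi^s_{\text{WB}},
\end{gather*}
where $\Phi^s_{\text{NB}}$ is the moment conditioned on no bridge gate being used and $\Phi^s_{\text{WB}}$ on at least one. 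The critical structural fact is that $\Phi^s_{\text{NB}}$, being a mixture of circuits acting independently on the two cliques, factorizes as a mixture of $\Phi^{s_1}_{C_1}\otimes\Phi^{s_2}_{C_2}$ over $s_1+s_2=s$, and by the PSD-vectorization property applied clique by clique dominates the ``fully scrambled'' tensor-product Haar moment $\Phi_{\text{Haar},n/2}^{\otimes 2}$ in the vectorized PSD order.

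I would apply Theorem~\ref{thm:multerr_expdef} with the ``single-singlet'' experiment $\vec{a}=(1,0,\dots,0)$. The global PSD vectorization gives $\Phi^s_{\text{WB}}\succeq\Phi_{\text{Haar},n}$, and the clique-level PSD dominance gives $\Phi^s_{\text{NB}}\succeq\Phi_{\text{Haar},n/2}^{\otimes 2}$, so
\begin{gather*}
\tr[\rho_{\vec{a}}\Phi^s_\varepsilon(\rho_{\vec{a}})]\;\geq\;(1-p)^s\tr[\rho_{\vec{a}}\Phi_{\text{Haar},n/2}^{\otimes 2}(\rho_{\vec{a}})] + \bigl(1-(1-p)^s\bigr)\tr[\rho_{\vec{a}}\Phi_{\text{Haar},n}(\rho_{\vec{a}})].
\end{gather*}
The core claim driving the bound is that for this specific $\vec{a}$ the tensor-product Haar picks up exactly twice what the global Haar does:
\begin{gather*}
\tr[\rho_{\vec{a}}\Phi_{\text{Haar},n/2}^{\otimes 2}(\rho_{\vec{a}})] = 2\,\tr[\rho_{\vec{a}}\Phi_{\text{Haar},n}(\rho_{\vec{a}})].
\end{gather*}
Substituting yields $\tr[\rho_{\vec{a}}\Phi^s_\varepsilon(\rho_{\vec{a}})]\geq\bigl(1+(1-p)^s\bigr)\tr[\rho_{\vec{a}}\Phi_{\text{Haar},n}(\rho_{\vec{a}})]$ and hence $\mathcal{M}\geq(1-p)^s$.

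The factor-of-$2$ identity is the main computational step. Because $\vec{a}$ is a product across the two cliques, the left-hand side factorizes into a product of two clique-level norms of the projection of $\mathbf{v}(\vec{a})$ onto the corresponding global-commutant basis $\{|I\rangle,|S\rangle\}$ (with Gram off-diagonal $q^{-n/2}$). A short Gram-matrix calculation gives explicit overlaps of the form $(1\pm q^{-1})(1+q^{-1})^{n/2-1}$, and the multiplicative identity between the clique-level generating functions and the $n$-site one, together with the matching Gram-inversion factor, makes the ratio collapse to exactly $2$ for every $n$ and $q$. To finish, $\epsilon\geq\mathcal{M}\geq(1-p)^s$ rearranges to $s\geq\log(1/\epsilon)/(-\log(1-p))$, and the elementary bound $-\log(1-p)\leq p/(1-p)=1/(m-1)$ converts this to $s\geq(m-1)\log(1/\epsilon)=\tfrac{n(n-2)}{4}\log(1/\epsilon)$ as claimed. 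The main technical obstacle is verifying the two PSD-dominance statements---one must check that the conditional moments $\Phi^s_{\text{NB}}$ and $\Phi^s_{\text{WB}}$ inherit PSD vectorization from the unconditional graph-sampled ensemble---after which the factor-of-$2$ calculation, while bookkeeping-intensive, is a routine application of the permutation-basis formula from the main text.
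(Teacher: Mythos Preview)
Your overall strategy---split by whether the bridge edge is ever sampled, lower-bound the ``no bridge'' channel by the tensor product of half-system Haar moments, and extract the factor $(1-p)^s$---matches the paper's. The edge count, the factor-of-$2$ identity for your single-singlet experiment $\vec{a}=(1,0,\dots,0)$, and the final logarithmic manipulation are all correct.

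The gap is precisely the step you yourself flag as the main obstacle: the claim that the with-bridge conditional $\Phi^s_{\text{WB}}$ has PSD vectorization, so that $\vectorize(\Phi^s_{\text{WB}})\succeq\vectorize(\Phi_{\text{Haar},n})$. This is not a matter of bookkeeping. The unconditional ensemble is PSD because $\vectorize(\Phi^s_\varepsilon)$ is the $s$th power of an average of orthogonal projectors, and likewise $\vectorize(\Phi^s_{\text{NB}})$ is the $s$th power of the average of the non-bridge projectors. But $\Phi^s_{\text{WB}}$ is the \emph{difference} $\bigl(\Phi^s_\varepsilon-(1-p)^s\Phi^s_{\text{NB}}\bigr)/\bigl(1-(1-p)^s\bigr)$, for which no power-of-PSD representation exists; it is an average of products of non-commuting projectors over words containing at least one bridge, and such an average need not dominate $\vectorize(\Phi_{\text{Haar},n})$. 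If you retreat to the safe bound $\tr[\rho_{\vec a}\,\Phi^s_{\text{WB}}(\rho_{\vec a})]\geq 0$, you obtain only $\mathcal{M}\geq 2(1-p)^s-1$, which loses the $\log\tfrac{1}{\epsilon}$ scaling entirely.

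The paper sidesteps this by not using PSD dominance for the conditionals at all. Instead it invokes monotonicity of the multiplicative error under post-composition with a locally invariant ensemble (citing ref.~\cite{belkin_absence_2025}): composing the connected conditional with a global Haar unitary collapses it to $\Phi_{\text{Haar},n}$, and composing the disconnected conditional with independent Haar unitaries on each half collapses it to $\Phi_{\text{Haar},n/2}^{\otimes 2}$, neither operation increasing $\mathcal{M}$. The paper then evaluates the resulting mixture using the all-singlets experiment rather than your single-singlet one; your choice also works, but the composition-monotonicity step is what actually closes the argument.
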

A proof is given in Appendix \ref{app:disconnected}. Ref. \cite{mittal_local_2023} asked if there is a universal asymptotic form for the circuit size needed for a graph-sampled architecture to give an approximate $t$-design. Together with the numerics shown in Figure \ref{fig:basic_depths}, this theorem strongly suggests that the answer is no. On the other hand, the exceptions we exhibit are due only to poor connectivity, which is somewhat trivial. There remain, then, two questions: Is failure-to-connect the only way to evade fast scrambling? Can we salvage any universal characterization of the scrambling rates of graph-sampled architectures?

\subsubsection{Results by connection count}
Figure \ref{fig:connections_basic} shows mean connection count by circuit size for several graphs, as estimated by the greedy algorithm described in Appendix~\ref{app:connections_of_architectures}.

\begin{figure}[H]
    \centering
    \begin{tikzpicture}
        \begin{scope}
            \node[anchor=north west,inner sep=0] (image_a) at (0,0)
            {\includegraphics[width=0.6\columnwidth]{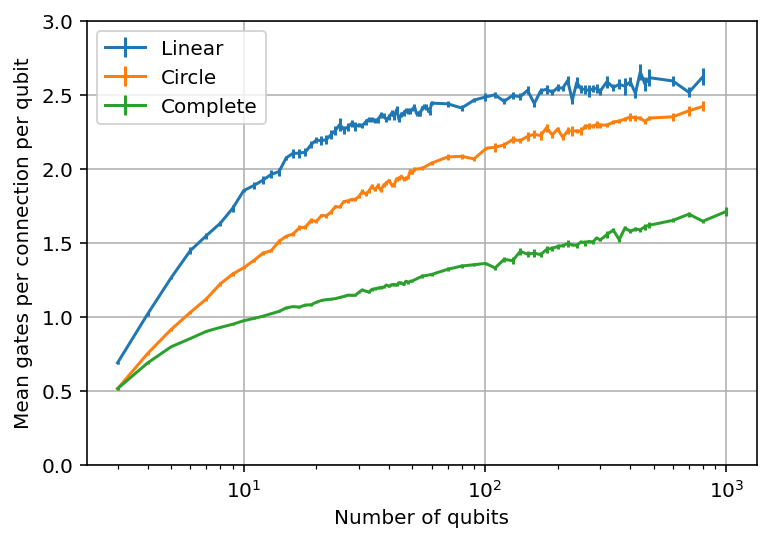}};
        \end{scope}
    \end{tikzpicture}
    \vspace*{-0.4cm}
    \caption{Mean connected blocks per gate per site, as estimated by the greedy algorithm described in Appendix \ref{app:connections_of_architectures}, for each of three families of graphs.}
    \label{fig:connections_basic}
\end{figure}

We can now repeat the approximate $2$-design depth calculations shown in Section \ref{sec:graphs_by_nqubit}, with the vertical axis rescaled to be in terms of connections counts. Results are shown in figures \ref{fig:basic_connections} and \ref{fig:fancy_connections}.

\begin{figure}[H]
    \centering
    \begin{tikzpicture}
        \begin{scope}
            \node[anchor=north west,inner sep=0] (image_b) at (0,0)
            {\includegraphics[width=0.6\columnwidth]{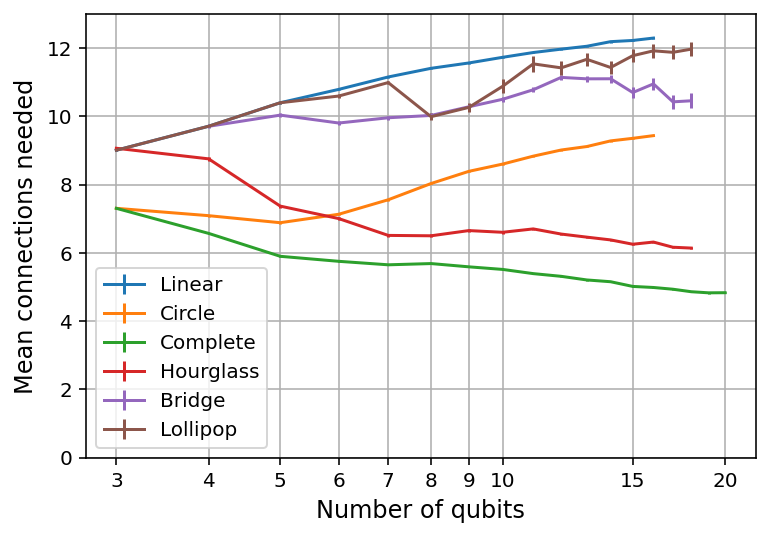}};
        \end{scope}
    \end{tikzpicture}
    \vspace*{-0.4cm}
    \caption{Connection count needed to reach an $0.01$-approximate $2$-design for each of six graph families. We see that all require roughly comparable connection counts, although some rise slightly with $n$ and others fall.}
    \label{fig:basic_connections}
\end{figure}

\begin{figure}[H]
    \centering
    \begin{tikzpicture}
        \begin{scope}
            \node[anchor=north west,inner sep=0] (image_a) at (0,0)
            {\includegraphics[width=0.4\columnwidth]{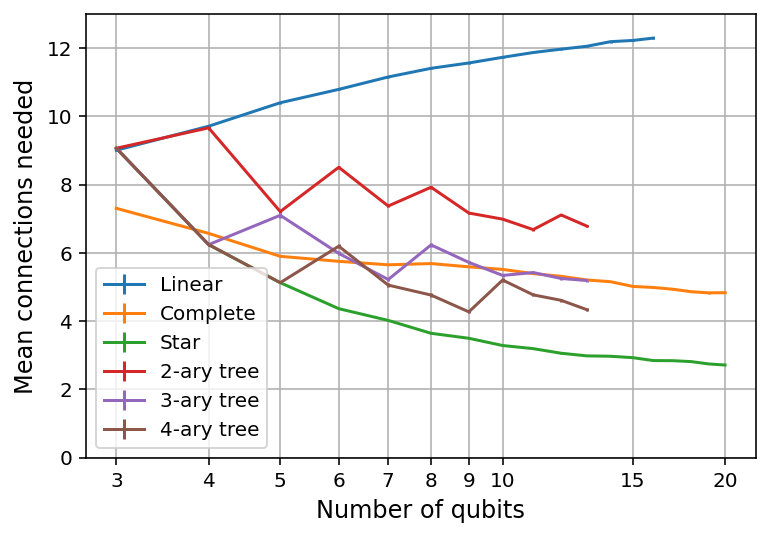}};
        \end{scope}
        \node [anchor=north west] (note) at (-0.2,0.1) {\small{\textbf{a)}}};
        \begin{scope}
            \node[anchor=north west,inner sep=0] (image_b) at (7.5,0)
            {\includegraphics[width=0.4\columnwidth]{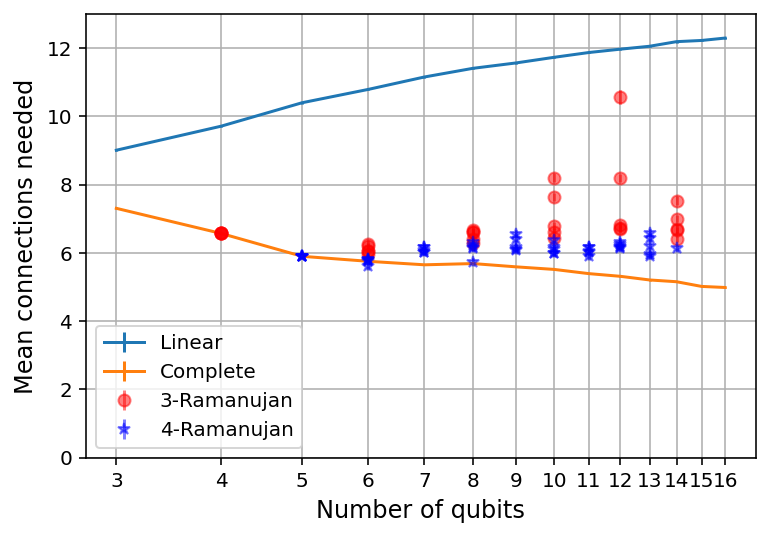}};
        \end{scope}
        \node [anchor=north west] (note) at (7.3,0.1) {\small{\textbf{b)}}};
    \end{tikzpicture}
    \vspace*{-0.4cm}
    \caption{Connection count needed to reach an $0.01$-approximate $2$-design for  (a) tree and star graphs, (b) Ramanujan graphs.}
    \label{fig:fancy_connections}
\end{figure}

In terms of connection count, the slowest-scrambling architecture tested is the linear graph, while the fastest is the star graph. Note that star graph has $O(n \log n)$ gates per connection, where as the linear graph need $O(n \log n)$ gates for the first connection but only $O(n)$ for later connections (see Appendix \ref{app:connections_of_architectures}). The lollipop graph looks quite similar to the brickwork, which is what one expects since the dominant contribution is due to the linear ``stick'' portion of the graph.

\subsection{Conjectures}
All of these results are consistent with two conjectures, illustrated in Figure \ref{fig:graph_conjectures}. We do not test every possible graph, nor $t > 2$, so the full strength of these is somewhat speculative. 

\begin{conjecture}
    \label{conj:gates}
    No other graph on $n$ qudits forms an $\epsilon$-approximate $t$-design with fewer gates than the complete graph, which requires $\Theta(n \log n)$ gates.
\end{conjecture}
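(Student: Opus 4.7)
I would aim for a matching $\Omega(n \log n)$ lower bound on the gate count for every graph. Combined with the known $O(n \log n)$ upper bound for the complete graph \cite{ambainis_quantum_2007}, this would establish the conjecture at $t = 2$. The natural tool is Theorem \ref{thm:multerr_expdef} applied to the single-site experiments $\vec{a}_i$ defined by $a_i = 1$ and $a_j = 0$ for $j \neq i$. Each such experiment probes whether the circuit has managed to scramble a singlet planted at site $i$ across the entire system.

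The first step is to derive a light-cone lower bound on the conditional observable. In the permutation-basis (stat mech) picture from Section 2.3, conditional on a specific gate placement $C$, the doubled-Hilbert-space moment operator acts as the identity outside the causal light cone $S_i(C) \subseteq [n]$ of site $i$. Since $\rho_{\vec{a}_i}$ has the swap label only at $i$ and identity labels elsewhere, the observable should factorize as $\mathrm{tr}(\rho_{\vec{a}_i} \Phi_\varepsilon(\rho_{\vec{a}_i}) \mid C) \geq c \cdot 2^{-|S_i(C)|}$ for a universal constant $c$, while the Haar value on $n$ qubits is $\Theta(2^{-n})$. Causality bounds $|S_i(C)| \leq 2^{K_i(C)}$, where $K_i(C)$ is the number of gates in $C$ touching site $i$, so the conditional observable is at least $c \cdot 2^{-2^{K_i(C)}}$.

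The remaining steps are clean averaging arguments. The function $k \mapsto 2^{-2^k}$ is convex, so Jensen's inequality gives $\mathbb{E}[\mathrm{tr}(\rho_{\vec{a}_i}\Phi_\varepsilon(\rho_{\vec{a}_i}))] \geq c \cdot 2^{-2^{\mathbb{E}[K_i]}}$. For the multiplicative error to be at most $\epsilon$, the ratio of this to the Haar value $\Theta(2^{-n})$ must lie in $[1, 1+\epsilon]$, which forces $\mathbb{E}[K_i] \geq \log_2 n - O(1)$ for every site $i$. Summing the identity $\sum_i K_i(C) = 2s$ across sites then gives $2s = \sum_i \mathbb{E}[K_i] \geq n \log_2 n - O(n)$, yielding $s \geq \Omega(n \log n)$ with equality achieved (up to constants) exactly when the graph is regular, as for the complete graph.

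The main obstacle is the claimed factorization in the first step. While the intuition is standard for 2-local circuits, making $\mathrm{tr}(\rho_{\vec{a}_i}\Phi_\varepsilon(\rho_{\vec{a}_i}) \mid C) \geq c \cdot 2^{-|S_i|}$ rigorous requires careful bookkeeping of the permutation-basis weights inside and outside $S_i$, since the transfer matrix of a Haar 2-qubit gate is not a pure projection but contributes nontrivial $q$-dependent coefficients that must be tracked through the tensor contraction. A secondary obstacle is that Theorem \ref{thm:multerr_expdef} is proved only at $t=2$, so extending the conjecture to general $t$ would require either a generalization of that theorem to higher permutations or an alternative choice of optimal test state built from higher-order moments.
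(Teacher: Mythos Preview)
The statement is a conjecture, and the paper does not offer a proof; it presents numerical evidence (Figures~\ref{fig:basic_depths}--\ref{fig:graph_conjectures}) and explicitly leaves the claim open. Your proposal attempts something the paper does not, so the relevant question is whether your argument is correct.

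There are two genuine gaps. First, the claimed upper bound: ref.~\cite{ambainis_quantum_2007} establishes $O(n^2)$ gates for the complete graph, not $O(n\log n)$. The $\Theta(n\log n)$ scaling for the complete graph is precisely part of what is being conjectured; it is not proved in the paper or in any of its references. Without that upper bound, even a correct $\Omega(n\log n)$ lower bound for all graphs would not establish the conjecture.

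Second, and fatally, the light-cone estimate $|S_i(C)| \leq 2^{K_i(C)}$ is false. The forward light cone of site $i$ grows via any gate that touches a site already in the cone, not only via gates that touch $i$ itself. A concrete counterexample is the sequence $(1,2),(2,3),\ldots,(n-1,n)$: site $1$ is touched by a single gate, so $K_1(C) = 1$, yet $|S_1(C)| = n$. Once this inequality fails, the Jensen step and the summation $\sum_i \mathbb{E}[K_i] = 2s$ no longer yield any constraint. The $\Omega(n\log n)$ lower bound for arbitrary architectures of Haar-random two-local gates \emph{is} known, but it comes from ref.~\cite{dalzell_random_2022} (restated in Appendix~\ref{app:dalzell_bounds}) via a considerably more delicate anticoncentration argument that does not reduce to a per-site light-cone count.
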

\begin{conjecture}
    \label{conj:connections}
    No other graph on $n$ qudits requires more connections to form an $\epsilon$-approximate $t$-design than the linear graph, which requires $\Theta(\log n)$ connections.
\end{conjecture}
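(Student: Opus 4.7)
The plan is to split the conjecture into two pieces: (A) the linear graph requires $\Theta(\log n)$ connections and (B) no graph on $n$ qudits requires asymptotically more connections than this. Part (A) is largely a consequence of already-established results: the $O(n \log n)$ gate count of \cite{brandao_local_2016} combined with the greedy connection analysis described in Appendix \ref{app:connections_of_architectures} (where the linear graph spends $O(n \log n)$ gates on the first connection and $O(n)$ gates per additional connection) directly yields an upper bound of $O(\log n)$ connections; the matching lower bound follows by noting that each connection can reduce the multiplicative error by at most a constant factor, because a single cross-block coupling is a $2$-site Haar unitary that cannot shrink the nontrivial overlaps $\mathbf{v}(\vec a)^T H \mathbf{v}(\vec a)$ in Theorem \ref{thm:multerr_expdef} by more than a constant on the global state, combined with the fact that any random circuit on $n$ sites needs multiplicative error to decay by a factor at least $n/\epsilon$.

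For the universal upper bound (B), I would adopt the block-merging viewpoint introduced by \cite{belkin_approximate_2024}. Fix a typical realization of the gates of an arbitrary graph-sampled architecture and, for each prefix of the circuit, partition the sites into the connected blocks induced by the gates drawn so far. Each time a newly drawn gate bridges two previously disjoint blocks, treat that event as a single coarse-grained edge in an abstract ``connection tree''. The claim to prove is that once the connection tree on $n$ leaves has been built up with $O(\log n)$ internal edges, the induced moment operator already agrees multiplicatively with that of a $2$-design on all $n$ sites. I would establish this by induction on connection count: within each block, additional gates only drive the local state closer to Haar, so the internal moment is bounded by the block Haar moment plus an error that decays with the within-block gate count. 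The across-block contribution is then controlled by the single bridging gate, which acts as a Haar-random two-site unitary on a product of (nearly) internally-Haar blocks; this is exactly the recursion studied in the tree-scramblers of \cite{schuster_random_2025, laracuente_approximate_2024}, where $\log n$ layers suffice.

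The comparison step is what ties this to the linear graph specifically. To upgrade the bound from $O(\log n)$ to ``no more than the linear graph'', I would couple, layer by layer of the connection tree, the coarse-grained dynamics of $G$ with that of a brickwork on the leaves. Because the linear graph already achieves the worst-case coarse-grained tree shape (a single long path, so every merger is between a large scrambled block and a fresh site), any other graph has a more balanced connection tree and hence a strictly smaller spectral-gap contribution per connection. Concretely, I expect to write the per-connection reduction of $\mathcal{M}$ as a Lindblad-like contraction on permutation basis states indexed by the tree, and show monotonicity in tree shape via a rearrangement inequality.

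The main obstacle is this last step: quantitatively comparing the per-connection contraction across different tree shapes. The coarse-grained bridging operation is Haar-random in the abstract, but the ``effective state'' it acts on inherits residual non-Haar correlations from the incompletely scrambled interiors of the merged blocks, and these corrections differ block-by-block. Controlling them uniformly — ideally by a single inequality of the form $\mathcal{M}_G(c) \le \mathcal{M}_{\text{linear}}(c)$ at equal connection count $c$ — will likely require a careful use of the PSD vectorization hypothesis and of the permutation-basis representation of $H$ developed in the paper. If the per-connection contractions can be shown to interlace with those of the linear graph, then the conjecture follows by the same argument that gives the lower bound in (A).
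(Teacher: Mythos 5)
This statement is one of the paper's stated \emph{conjectures}: the paper offers no proof of it at all, only numerical evidence (Figures \ref{fig:basic_connections}, \ref{fig:fancy_connections}, and \ref{fig:graph_conjectures}b, together with the greedy-algorithm analysis of Appendix \ref{app:connections_of_architectures} suggesting the linear graph needs $\sim 4.3 \log n$ connections), and it explicitly lists ``Can our conjectures be proven?'' as an open question in the conclusion. So there is no paper proof to compare against, and your proposal should be judged on whether it would actually close the conjecture. It would not, for two concrete reasons.

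First, your induction on connection count relies on the step ``within each block, additional gates only drive the local state closer to Haar.'' That is precisely the monotonicity conjecture of ref.~\cite{harrow_approximate_2023}, which the paper points out was refuted by the counterexample of ref.~\cite{belkin_absence_2025}; adding random gates can increase the multiplicative distance from the Haar measure, so this step fails as stated. (The one-sided composition result the paper does use, in Appendix \ref{app:disconnected}, only lets you compose with \emph{Haar} unitaries on fixed supports, not with arbitrary extra local gates inside a block.) Second, the comparison step --- the claimed inequality $\mathcal{M}_G(c) \le \mathcal{M}_{\text{linear}}(c)$ at equal connection count, via a rearrangement/interlacing argument over tree shapes --- is asserted rather than proven, and you acknowledge it is the main obstacle. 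But that inequality \emph{is} the conjecture; reducing the conjecture to an equally unproven monotonicity statement about coarse-grained contractions does not constitute progress that the paper's own framework supports. The lower-bound half of your part (A) has a similar issue: ``each connection can reduce the multiplicative error by at most a constant factor'' is plausible for the linear graph but is not established by anything in the paper (Theorem 3 of ref.~\cite{belkin_approximate_2024} bounds the error from \emph{above} in terms of connections, which is the wrong direction for a lower bound on the number of connections required). The honest summary is that your proposal is a reasonable research program, roughly aligned with the connectedness heuristics the paper uses to motivate the conjecture, but each of its load-bearing steps is either known to be false in general or is itself an open problem.
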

\begin{figure}[H]
    \centering
    \begin{tikzpicture}
        \begin{scope}
            \node[anchor=north west,inner sep=0] (image_a) at (0,0)
            {\includegraphics[width=0.4\columnwidth]{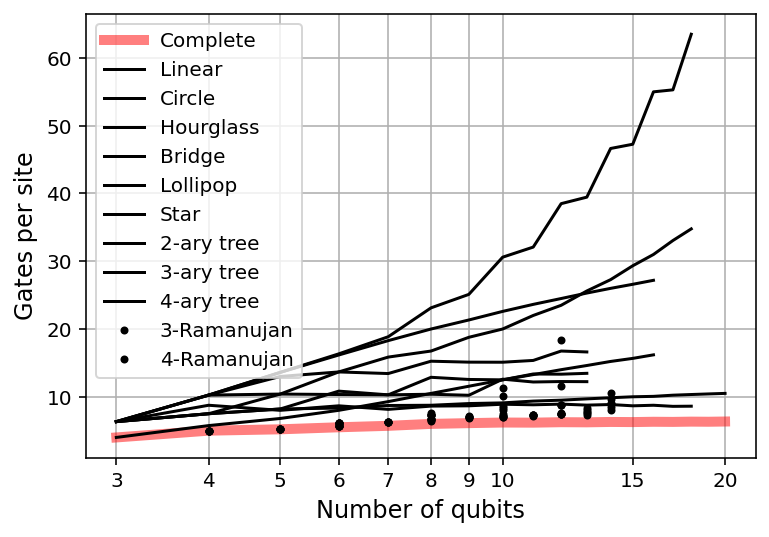}};
        \end{scope}
        \node [anchor=north west] (note) at (-0.2,0) {\small{\textbf{a)}}};
        \begin{scope}
            \node[anchor=north west,inner sep=0] (image_b) at (7.4,0)
            {\includegraphics[width=0.4\columnwidth]{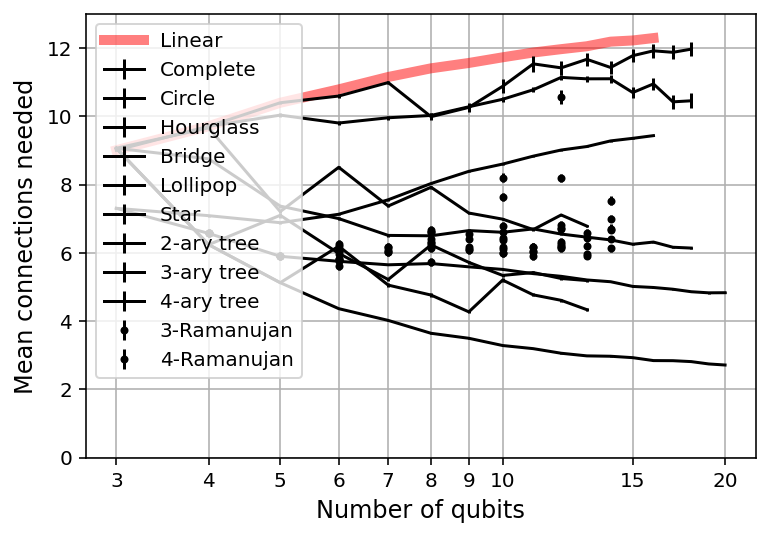}};
        \end{scope}
        \node [anchor=north west] (note) at (7.2,0) {\small{\textbf{b)}}};
    \end{tikzpicture}
    \vspace*{-0.4cm}
    \caption{Conjectured bounds. (a) Circuit size needed to reach an $0.01$-approximate $2$-design. The complete graph is fastest. (b) Connection counts needed to reach an $0.01$-approximate $2$-design. The linear graph is the slowest.}
    \label{fig:graph_conjectures}
\end{figure}

From Figure \ref{fig:graph_conjectures}b, it is not clear that the connection count needed by the linear graph scales as $\Theta(\log n)$. However, we argue in appendix \ref{app:connections_of_architectures} that this scaling is likely to emerge at much larger $n$. Similarly, the $\Theta(n \log n)$ scaling of the complete graph is difficult to discern from Figure \ref{fig:graph_conjectures}a, but we show in Figure \ref{fig:fast} that an $\Omega(n \log n)$ lower bound will become effective at much larger $n$.

The connection depth for any graph is at most $|E| \log n$.\footnote{Proof: Choose a spanning tree of $n-1$ edges. A fraction $(n-1)/|E|$ of the gates will land on that tree. The graph is connected once the coupon collector problem on the tree is solved, which requires $O(n \log n)$ tree edges, or $O(|E| \log n)$ total edges.} So conjecture \ref{conj:connections} also implies an upper bound of $O(|E|(\log n)^2)$ gates, which is at worst $O(n^2 (\log n)^2)$. It seems plausible that the lollipop graph may saturate this bound.

\section{Brickwork}
\label{sec:brickwork}
\subsection{Prior work}
The convergence of the 1D brickwork random circuit to the Haar measure has been the subject of much study\cite{brandao_local_2016,haferkamp_random_2022, haferkamp_improved_2021, chen_incompressibility_2024, allen_conditional_2025, dalzell_random_2022}. The first case to be understood was anticoncentration, which asks when the collision probabilities of computational basis measurements become similar. This was essentially resolved by ref. \cite{dalzell_random_2022}, which gave both upper and lower bounds scaling as $\frac{\log n}{\log \frac{q^2 + 1}{2q}}$. The rate of convergence of the collision probability to the Haar-measure is thus very well-understood. What remains open is whether or not all possible experiments behave similarly. 

Until 2024, it was widely assumed that there existed some observables which required depth $O(n)$ to converge. However, refs. \cite{schuster_random_2025} and \cite{laracuente_approximate_2024} proved that certain brickwork-like architectures form approximate $t$-designs in depth $O(\log n)$. More precisely, these architectures are 1D brickworks with certain gates removed. This suggested strongly that the scaling of the 1D brickwork approximate $t$-design depth would also be $O(\log n)$. Indeed, a conjecture of ref.~\cite{harrow_approximate_2023} implies that removing random gates from an architecture can never increase the distance from the Haar measure, which would have sufficed for a proof. Ref. \cite{belkin_absence_2025}, however, constructs a counterexample to that conjecture. The question of whether all observables converge at the same rate as the collision probability thus remains open. 

Here we provide convincing numerical evidence in the case $t = 2$. The best known upper bound in this case comes from  combining the exact spectral gap of \cite{deneris_exact_2024} with Theorem 54 of ref. \cite{allen_conditional_2025} to obtain
\begin{gather}
    d(n,q,\epsilon) \leq 1 + \frac{2 n t \log q + \log \frac{1}{\epsilon}}{\log \frac{q^2 + 1}{2q}} 
\end{gather}
(see also \cite{znidaric_solvable_2022}). 
The best known lower bound, on the other hand, is via ref. \cite{dalzell_random_2022}, which is roughly of the form 
\begin{gather}
     d(n,q,\epsilon) \geq \frac{\log n - 13.81}{\log \frac{q^2 + 1}{2q}}
\end{gather}
See Appendix~\ref{app:scaling_depth} for a more careful bound.  Note that this bound is for the case of periodic boundary conditions, although it seems likely that essentially the same argument goes through with open boundary conditions. 

\subsection{Results}
Figure \ref{fig:brickwork_nscaling} shows the 0.01-approximate $2$-design depth of the open-boundary-condition 1D brickwork in terms of qubit count. We find empirically that the optimal experiment always corresponds to preparing an antisymmetric state on the two endpoints of the line and a symmetric state in the bulk. In the language of Eq \ref{eq:experimental_multerr_hermitian} this is 
\[\vec{a} = \begin{bmatrix} 
1 & 0 & 0 & \dots & 0 & 0 & 1 
\end{bmatrix}\] 
We call this irrep the \textbf{entangled boundaries} experiment, since the two copies of the system are unentangled everywhere except for the edges. 

For this particular choice of experiment we compute the error numerically out to 50 qubits. Furthermore, applying our Equation \ref{eq:experimental_multerr_hermitian} to this $\vec{a}$ and using the work of ref. \cite{deneris_exact_2024} allows one to show, via a rather tedious calculation, that the dominant large-$n$, small-$\epsilon$ behavior of the 2-design depth is of the form 
\begin{align}
    \label{eq:brickwork_formula_template}
    f(n,q,\epsilon) &= \alpha \left(\log n - \log \epsilon\right) + \beta
    \\&= \frac{\log \left[\frac{2}{\pi^2}\frac{q^2 -1}{q} \frac{n}{\epsilon}\right]}{\log \frac{q^2 + 1}{2q}} + O\left(\frac{1}{n^2}\right)
\end{align}
with parameters
\begin{gather}
\label{eq:brickwork_alpha}
\alpha = \frac{1}{\log \frac{q^2 + 1}{2q \cos \frac{\pi}{N}}}
\end{gather}
\begin{gather}
\label{eq:brickwork_beta}
\beta = 1 + \alpha \log \left[\frac{4 \cot ^2\frac{\pi}{n} \left(\left(q^2+1\right)^2-4 q^2 \cos
   \frac{2\pi}{n}\right)^2}{n^2 \left[q^8 - 2 \left(q^4-1\right) q^2 \cos \frac{2\pi}{n} - 1\right] + n\left[4 q^4 \cos\frac{4\pi}{n}-4
   q^4\right]}\right] 
\end{gather}
arising from the aforementioned calculation. By a similar strategy one may obtain asymptotic formulas for the periodic-boundary-condition brickwork and for the linear and circle graphs. We intend to give a more detailed derivation of this formula, including a tighter lower bound, a generalization to $t > 2$, and a similar strategy for obtaining upper bounds, in a future work.

This bound is included in the figure for reference. Although it is formally only a lower bound on the multiplicative error, it fits the data very well. The small deviations visible will shrink as $\epsilon \rightarrow 0$ and $n \rightarrow \infty$.  

\begin{figure}[H]
    \centering
    \begin{tikzpicture}
        \begin{scope}
            \node[anchor=north west,inner sep=0] (image_b) at (0,0)
            {\includegraphics[width=0.6\columnwidth]{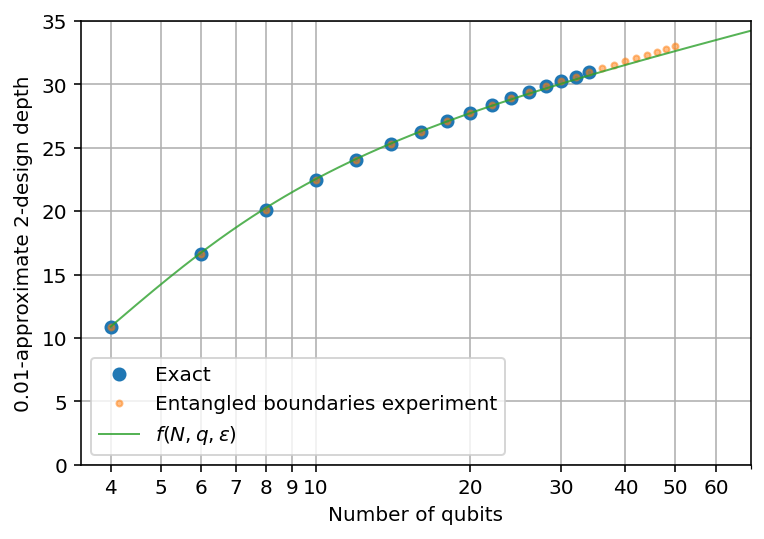}};
        \end{scope}
    \end{tikzpicture}
    \vspace*{-0.4cm}
    \caption{Depth needed for the brickwork to reach an $0.01$-approximate $2$-design, compared against two semi-empirical models. The entangled boundaries bound is tight everywhere tested, while the analytical form $f$ is merely a very good approximation.}
    \label{fig:brickwork_nscaling}
\end{figure}

\section{Fast architectures}
Here we present results on a few architectures which scramble especially quickly. 

\subsection{Prior work}
There has been a variety of prior work on shallow architectures which form good approximate $t$-designs. However, much of this work has focused on carefully-constructed circuits, with the goal of producing provable designs efficiently on a very large quantum computer \cite{metger_simple_2024, suzuki_more_2025}. Most recently, it was shown one can obtain an approximate $2$-design as fast as depth $O(\log \log n)$\cite{cui_unitary_2025}. However, this requires incorporating ancilla qubits, many non-Haar-random gates, and a large constant-factor overhead. If one further allows many-qubit gates or midcircuit measurement, this can be brought down to $O(1)$ \cite{foxman_random_2025}.

These results may be useful if one has a wishes to construct a unitary design on a quantum computer. However, here we are interested in studying the behavior of ``natural'' random circuits, with very little structure other than the geometric pattern of the gates. We thus restrict ourselves to local Haar-random gates and no ancillae. In this setting, Theorem 4 of ref. \cite{dalzell_random_2022} implies a lower bound on the number of gates per site needed,
\begin{align}
    \frac{s}{n} &\geq \frac{\log n - \log \frac{(q+1) \log \left(1 + 2 \epsilon\right)}{\log(q + 1)} }{\log (q^2 + 1)} 
    \\ &\sim \log_5 \frac{n}{\epsilon} - 0.801
\end{align}
(see Appendix~\ref{app:dalzell_bounds} for details). Ref. \cite{dalzell_random_2022} also asks which architecture gives the fastest possible anticoncentration, suggesting the parallel complete-graph defined below as a possible answer. The question we study here is quite similar. On the other hand, the fastest known provable examples are due to refs \cite{schuster_random_2025} and \cite{laracuente_approximate_2024}, both $O(\log n)$ (with large constants). These architectures are designed to be easy to prove theorems about, but it seems unlikely that they are especially fast scramblers in practice.

\subsection{Architectures}
\label{sec:fast_definitions}
Let us define a few more complicated distributions over circuit architectures. These are neither a single fixed arrangement of gates nor with a graph with gate locations sampled i.i.d.

Suppose we draw a random two-sided matching of the sites, then apply a layer of Haar-random 2-site gates to those pairs in parallel. In other words, we sample a random complete layer, i.e. a random set of $\frac{N}{2}$ gates such that each site is acted on by exactly one gate. This is the \textbf{parallel complete-graph} (PCG) architecture. This is the architecture suggested by ref. \cite{dalzell_random_2022} as a possible ``fastest anticoncentrator.''

Suppose we draw layers as in the parallel complete-graph architecture, except that we require each adjacent pair of layers to form a connected block. This guarantees, for example, that we never ``waste'' a gate by repeating a gate from the previous layer. Avoiding this kind of waste increases the speed of scrambling. This architecture is similar to applying a single period of 1D brickwork to a random permutation of the sites, so I'll call it the \textbf{permuted brickwork} (PB). Like the brickwork, this architecture can be shown to have a PSD vectorization if the depth is odd. It is unclear if the vectorization is PSD at even depths. 

Suppose we instead keep the even-numbered layers the same every time, but draw the odd layers so that each adjacent pair forms a connected block. This architecture scrambles slower than the permuted brickwork, but it's a bit more numerically tractable (see Appendix \ref{app:algo_tricks}), so we can study it out to larger system sizes. We'll call it the \textbf{permuted brickwork with fixed evens} (PBFE).

\subsection{Results}
Figure \ref{fig:fast} shows $0.01$-approximate $2$-design depths for the architectures discussed above. 

\begin{figure}[H]
    \centering
    \begin{tikzpicture}
        \begin{scope}
            \node[anchor=north west,inner sep=0] (image_a) at (0,0)
            {\includegraphics[width=0.6\columnwidth]{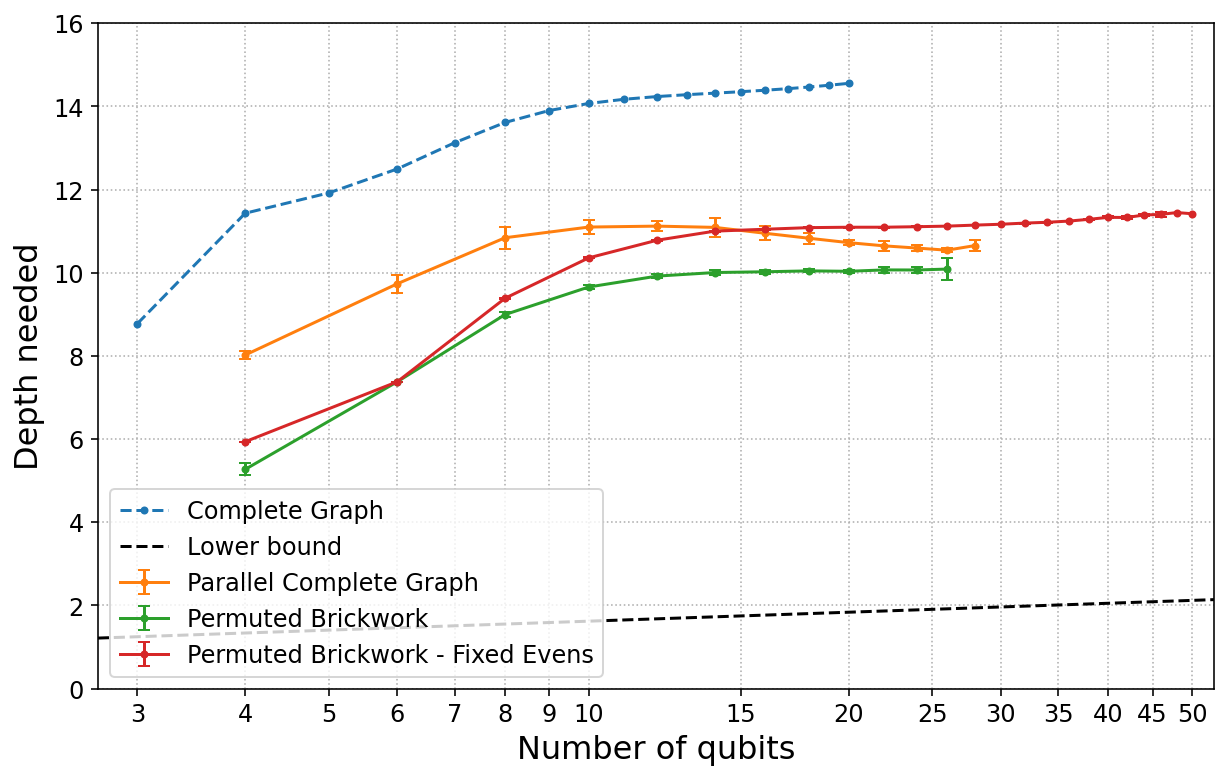}};
        \end{scope}
    \end{tikzpicture}
    \vspace*{-0.4cm}
    \caption{Approximate $2$-design depths for each of the fast architectures. All are faster than the complete graph, with Permuted Brickwork the fastest. Furthermore, depth appears to be quite flat out to $n = 50$ in at least the PBFE case.}
    \label{fig:fast}
\end{figure}

These architectures appear to form approximate $2$-designs much faster than any graph-sampled architecture tested. However, even the permuted brickwork is probably not the fastest possible architecture composed of Haar-random gates. It seems likely one could do even better with longer lookback periods (e.g. refusing to repeat not just the previous layer, but any of the 5 previous layers). One interesting question is if there is any optimal ensemble. For example, a Boolean hypercube architecture might be another interesting candidate to consider. Another is whether these are faster in practice than the constructions of refs. \cite{cui_unitary_2025, schuster_random_2025, laracuente_approximate_2024}. It is unclear if the large constant factors in those cases are artifacts of the proofs or truly essential. 

\section{Anticoncentration vs. 2-designs}
\label{sec:anticoncentration}
Ref.~\cite{heinrich_anti-concentration_2026} proves that anticoncentration and being a \textit{state} 2-design are essentially the same. Can this result be extended to the unitary case? For a unitary ensemble, anticoncentration asks about indistinguishability from the Haar measure by looking at a particular observable (the collision probability). An approximate $2$-design, on the other hand, requires that every choice of observable be hard to distinguish from the Haar measure. This raises a basic question: Do all observables converge at essentially the same rate? Is the behavior of the collision probability generic, or are there other classes of observables which are much slower to scramble? One may also view this question in the language of Eq \ref{eq:experimental_multerr_hermitian}, where collision probability corresponds to the choice $\vec{a} = \vec{0}$. In this language, we ask: Does the whole diagonal of the moment operator converge to its Haar value in roughly the same way, or are some of the elements special?

\subsection{Which experiments are optimal?}
\begin{figure}[H]
    \centering
    \begin{tikzpicture}
        \begin{scope}
            \node[anchor=north west,inner sep=0] (image_a) at (0,0)
            {\includegraphics[width=0.4\columnwidth]{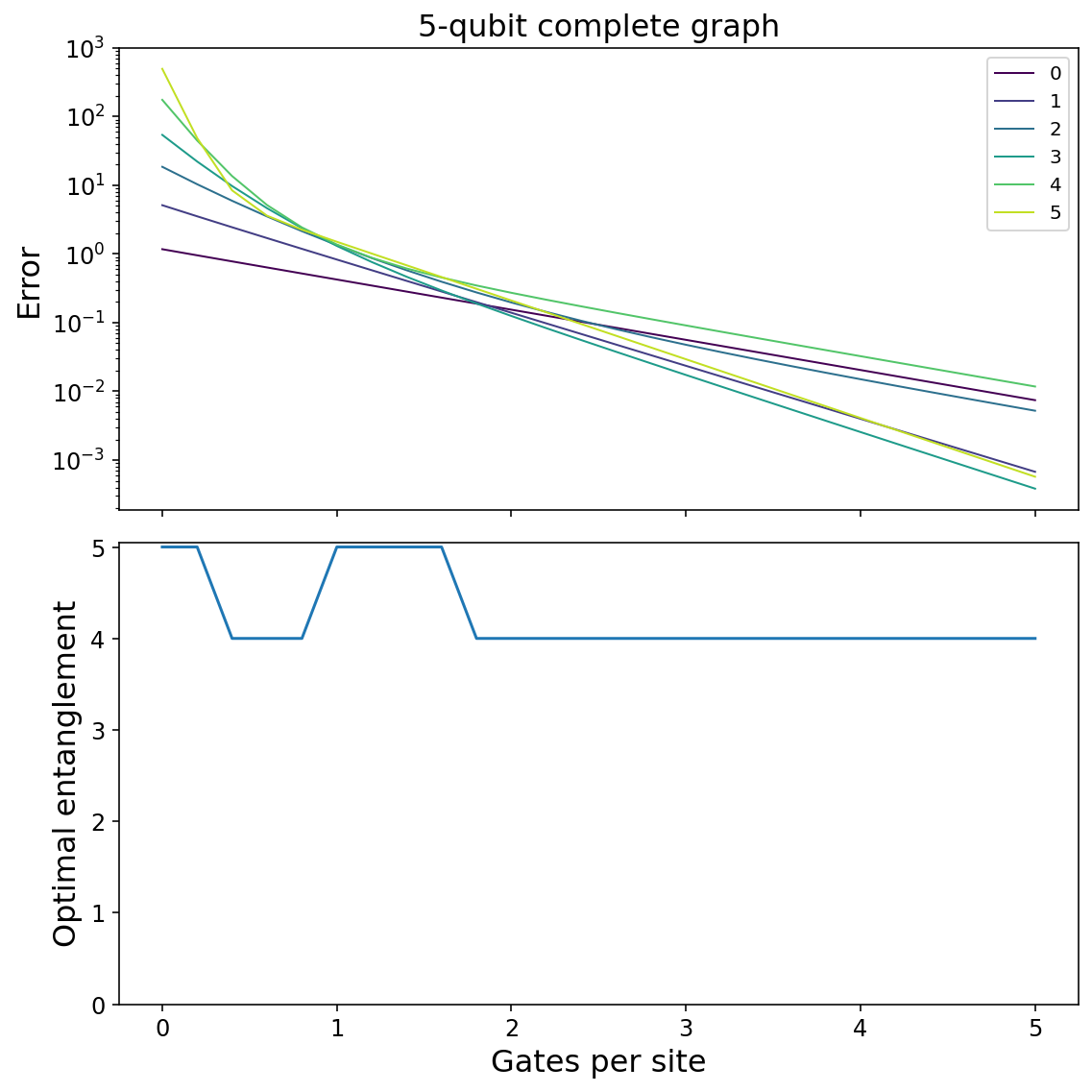}};
        \end{scope}
        \node [anchor=north west] (note) at (-0.2,0) {\small{\textbf{a)}}};
        \begin{scope}
            \node[anchor=north west,inner sep=0] (image_b) at (7.4,0)
            {\includegraphics[width=0.4\columnwidth]{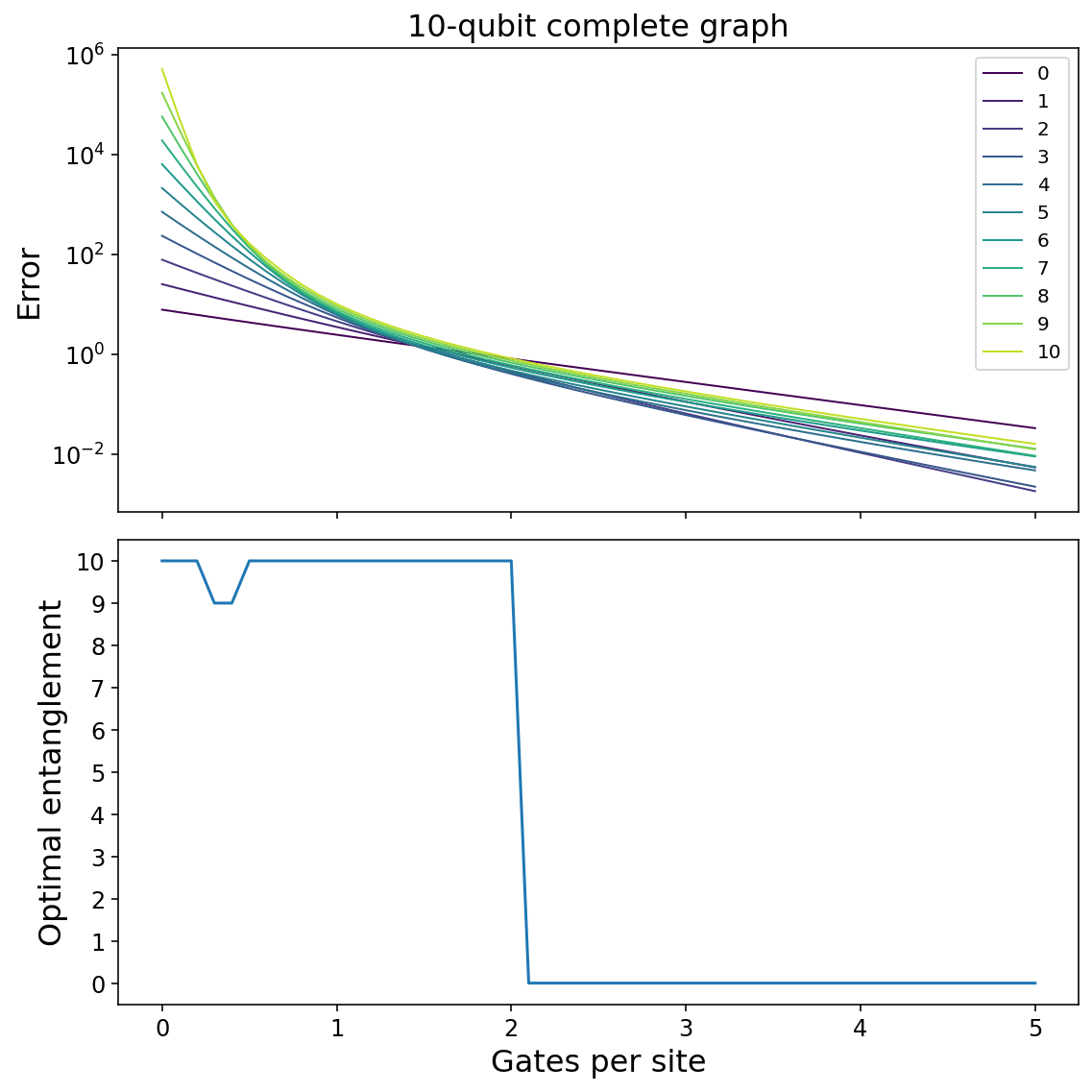}};
        \end{scope}
        \node [anchor=north west] (note) at (7.2,0) {\small{\textbf{b)}}};
    \end{tikzpicture}
    \caption{Error ratios and optimal experiments for complete-graph architectures. Upper panels show the error against the Haar measure for each choice of $\vec{a}$ and various circuit sizes. The legend indicates Hamming weights of $\vec{a}$, or equivalently the between-copy entanglement entropy of $\rho_{\vec{a}}$. Lower panels show the Hamming weight of the optimal $\vec{a}$ at each circuit size.
    (a) On 5 qubits, the optimal experiment always involves preparing singlet states on 4 or 5 of the sites. (b) With 10 qubits, singlet states are optimal below circuit size 20. Above circuit size 20, the collision probability (i.e. preparing product states on all sites) is better. However, the all-singlets experiment remains the second-best option.}
    \label{fig:complete_anticoncentration}
\end{figure}

Figure \ref{fig:complete_anticoncentration} shows the trajectories of all possible experiments (i.e. all $\vec{a} \in \{0,1\}^n$) for the complete graph. Since this ensemble is invariant under any permutation of the sites, we can label experiments only by the total number of singlet states prepared. This is the same as the Hamming weight of $\vec{a}$ or the entanglement entropy $S_E$ between the two copies (in bits). The error associated with experiment $\vec{a}$ is 
\begin{gather}
    \frac{\tr \left[\rho_{\vec{a}} \Phi_\varepsilon\left(\rho_{\vec{a}}\right)\right]}{\tr \left[\rho_{\vec{a}} \Phi_\text{Haar}\left(\rho_{\vec{a}}\right)\right]} - 1
\end{gather}

The main lesson of this figure is that the situation is quite complicated. Even for the complete graph, which is very symmetric, there appears to be no general pattern 
governing the optimal experiment. There are three loose trends which seem to hold widely:
\begin{itemize}
    \item At early times, highly-entangled experiments dominate. In particular, consider depth 0. The corresponding circuit is a tensor product of $n$ single-site unitaries. In this case one can show that the optimal experiment is $\vec{a} = \vec{1}$, which for qubits gives a multiplicative error $\sim 3^n$ times larger than the collisional error. 
    \item At late times, experiments with even Hamming weight all decay at the same rate (presumably corresponding to the spectral gap of the moment operator). Experiments with odd Hamming weight decay faster. In other words, when $\sum_i a_i$ is odd, then $\vectorize\left( \rho_{\vec{a}}\right)$ is orthogonal to the dominant eigenspace of the $\vectorize \Phi_\mathcal{E}$. 
    \item When there are 7 or more qubits and 2 or more gates per site, the collision probability dominates. 
\end{itemize}
For small systems and shallow circuits, however, we see only anarchy. 

\subsection{Does making it bigger help?}
The complete graph is at least well-behaved when the circuit is deep and wide enough. Is there a general rule that the collision probability determines the approximate $2$-design depth in some suitable limit? 

Figure \ref{fig:star_anticoncentration} compares anticoncentration and approximate $2$-design depths for the star graph. More precisely, it shows the number of gates required for both the multiplicative error
\begin{gather}
    \max_{\vec{a} \in \{0,1\}^{n}} \frac{\tr \left[\rho_{\vec{a}} \Phi_\varepsilon\left(\rho_{\vec{a}}\right)\right]}{\tr \left[\rho_{\vec{a}} \Phi_\text{Haar}\left(\rho_{\vec{a}}\right)\right]} - 1
\end{gather}
and the \textbf{collisional error}
\begin{gather}
    \frac{\tr \left[\rho_{\vec{0}} \Phi_\varepsilon\left(\rho_{\vec{0}}\right)\right]}{\tr \left[\rho_{\vec{0}} \Phi_\text{Haar}\left(\rho_{\vec{0}}\right)\right]} - 1
\end{gather}
to reach $0.01$.

\begin{figure}[H]
    \centering
    \begin{tikzpicture}
        \begin{scope}
            \node[anchor=north west,inner sep=0] (image_a) at (0,0)
            {\includegraphics[width=0.4\columnwidth]{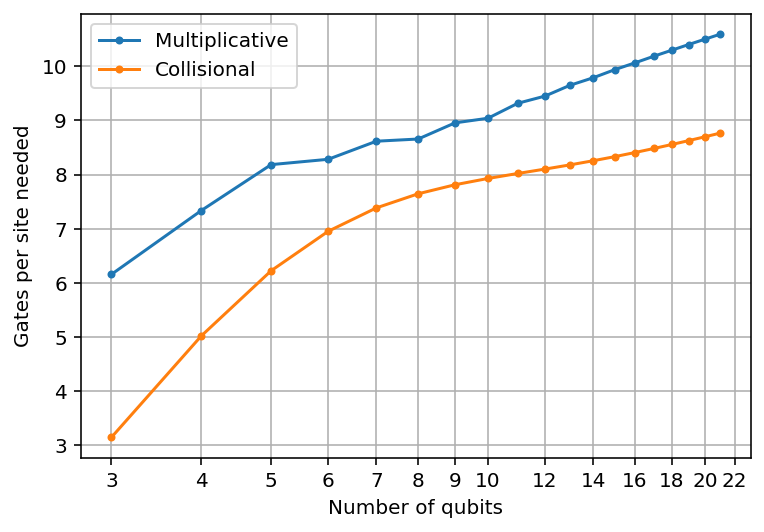}};
        \end{scope}
        \node [anchor=north west] (note) at (0,0) {\small{\textbf{a)}}};
        \begin{scope}
            \node[anchor=north west,inner sep=0] (image_b) at (7,0)
            {\includegraphics[width=0.4\columnwidth]{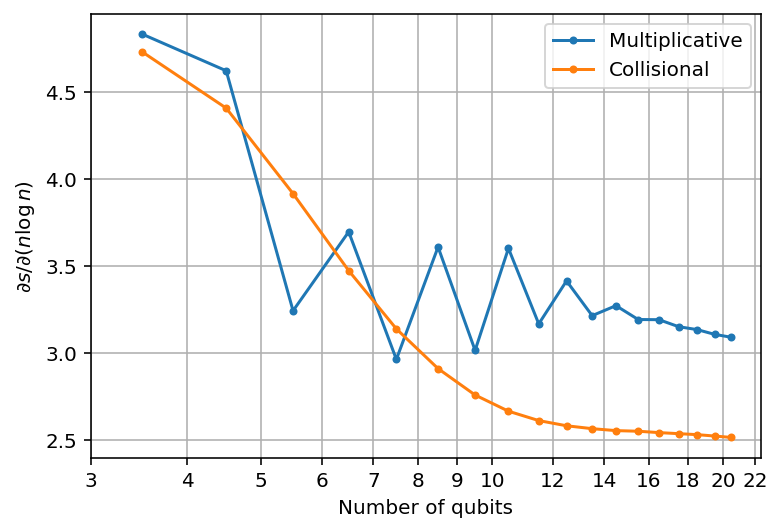}};
        \end{scope}
        \node [anchor=north west] (note) at (7.2,0) {\small{\textbf{b)}}};
    \end{tikzpicture}
    \caption{Anticoncentration vs. 2-design-ness for the star graph. (a) Depths needed for the multiplicative and collisional errors to reach $0.01$ for the star graph at various system sizes. (b) Slopes, estimated from finite differences. The two slopes appear to converge towards different constant levels. }
    \label{fig:star_anticoncentration}
\end{figure}

The gap between anticoncentration depth and approximate $2$-design depth appears to get larger as $n$ increases. This suggests that even with large $n$ or small $\epsilon$, it is not true that the anticoncentration depth and the approximate $2$-design depth are necessarily close together.

For the star, the optimal experiment generally involves preparing the entangled (antisymmetric) state on all of the points of the star. The parity of the center qubit depends on $n$; it should be chosen so that $\vec{a}$ is odd. It seems generally that optimal experiments involve entangled states near edges of the geometry and product states in the bulk.

We've seen that the scaling approximate $2$-design depth is controlled mostly by the norm of the projection of $\rho_{\vec{a}}$ into the dominant eigenspace. The collision probability can converge much faster than other observables if and only if $\vectorize \left(\rho_{\vec{0}}\right)$ is nearly orthogonal to the dominant eigenspace. It may be possible to understand these results more clearly by determining the dominant eigenvectors of the star graph.

\subsection{Anticoncentration of brickworks}
We see that the general situation is complicated. However, we can at least say something very concrete for 1D brickwork circuits. Figure \ref{fig:brickwork_anticoncentration} shows the (interpolated) circuit depth required for both the multiplicative and collisional error to reach 0.01, for 1D brickwork architectures with both open and periodic boundary conditions.

\begin{figure}[h]
    \centering
    \begin{tikzpicture}
        \begin{scope}
            \node[anchor=north west,inner sep=0] (image_a) at (0,0)
            {\includegraphics[width=0.45\columnwidth]{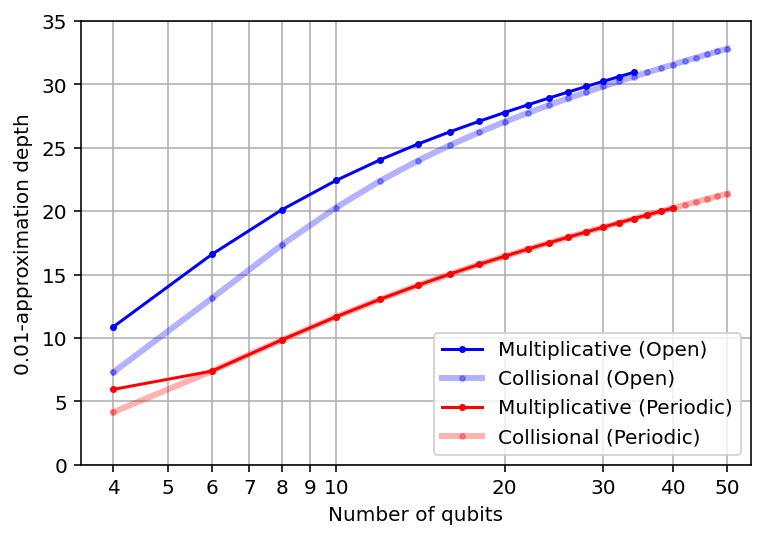}};
        \end{scope}
        \node [anchor=north west] (note) at (-0.1,0) {\small{\textbf{a)}}};
        \begin{scope}
            \node[anchor=north west,inner sep=0] (image_b) at (8,0){\includegraphics[width=0.45\columnwidth]{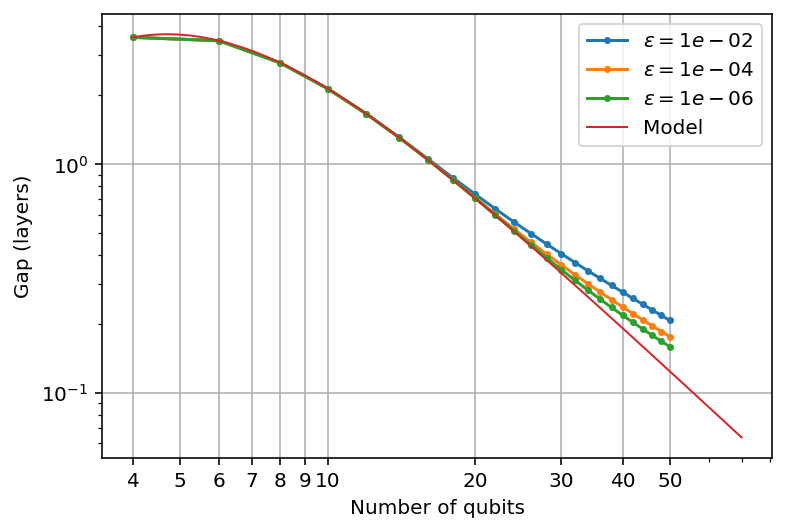}};
        \end{scope}
        \node [anchor=north west] (note) at (8.3,0) {\small{\textbf{b)}}};
    \end{tikzpicture}
    \caption{(a) $0.01$-approximate $2$-design depth and $0.01$-anticoncentration depths for 1D brickworks, with open and periodic boundary conditions. We see that anticoncentration and being a $2$-design are usually equivalent with periodic boundaries, but inequivalent with open boundaries. (b) The gap between the depths needed to reach relative error $\epsilon$ for the entangled boundaries experiment and the collision probability, for the open-boundary case. The gap converges to the prediction of the single-eigenvector model as $\epsilon \rightarrow 0$.}
\label{fig:brickwork_anticoncentration}
\end{figure}

 With open boundary conditions, the multiplicative error is dominated by the ``entangled boundaries'' state, 
\[\vec{a} = \begin{bmatrix} 
1 & 0 & 0 & \dots & 0 & 0 & 1 
\end{bmatrix}\]
With periodic boundary conditions, the collision probability itself dominates for all $n \geq 6$, and so the two curves coincide exactly. 

To understand the gap seen in the open case, we can work out an analogue of Equation \ref{eq:brickwork_formula_template} for the collision probability. With open boundary conditions, it turns out that the contribution of the dominant eigenspace is 
\begin{align}
    f(N,q,\epsilon) &= \alpha \left(\log n - \log \epsilon\right) + \beta
\end{align}
with parameters
\begin{gather}
\alpha = \frac{1}{\log \frac{q^2 + 1}{2q \cos \frac{\pi}{n}}}
\end{gather}
\begin{gather}
\beta = 1 + \alpha \log \left[\frac{4 \cot ^2\frac{\pi}{n} \left(q^2-1\right)^4}{n^2 \left[q^8 - 2 \left(q^4-1\right) q^2 \cos \frac{2\pi}{n} - 1\right] + n\left[4 q^4 \cos\frac{4\pi}{n}-4
   q^4\right]}\right] 
\end{gather}
which differ from Equations \ref{eq:brickwork_alpha} and \ref{eq:brickwork_beta} only slightly, in the numerator of $\beta$. If we take $q = 2$ and expand for large $n$, we find eventually that the difference in depths is
\begin{gather}
    \Delta = \frac{64 \pi^2}{9\log \frac{5}{4}} \frac{1}{n^2} + O\left(\frac{1}{n^3}\right) \approx \frac{314.52}{n^2}
\end{gather}
The analogous calculation for periodic boundary conditions is just $\Delta = 0$.

\section{Conclusion}
Previous work on approximate unitary designs has generally focused on proving asymptotic bounds. While this has resulted in much progress, it is rarely clear how well a provable bound corresponds to the actual behavior of the ensemble. Here we instead give a variety of exact calculations for finite system sizes. Although our results can't formally say anything about large $n$, they are in many cases strongly suggestive. Data like this helps illustrate the relationship between what we can prove and what is true.

We do include a few pure theoretical contributions. First, we show that the optimal experiment which distinguishes a given random circuit architecture from the Haar measure is highly constrained. Second, we give a relatively tractable algorithm for determining the approximate $2$-design depths of suitable circuits. And third, we prove that at least some graph families require $O(n^2)$ gates to form an approximate $t$-design. 

There are several directions in which one may extend our algorithm. First and most obvious is an extension to $t > 2$. In that case the irreps are no longer one-dimensional, and so the Choi matrix is merely block-diagonal, which presents some difficulties. A second question is whether they can be extended to groups other than the unitary group. Recent results have shown that the formation of designs over more constrained groups may behave quite differently \cite{grevink_will_2025, west_no-go_2025, liu_unitary_2024, mitsuhashi_unitary_2025, li_designs_2024, haah_short_2025}. It seems likely that Theorem \ref{thm:multerr_expdef} can be extended to other groups with suitable structure. A third question is whether similar formulas exist for additive or measurable error. 

The bulk of this paper concerns our numerical results. It appears that all graphs require at least $\Omega(n \log n)$ gates and at most $O(\log n)$ connections to form an approximate $2$-design. We furthermore suggest that the linear and complete graphs are, as has often been guessed \cite{dalzell_random_2022}, most likely extremal. This greatly constrains the influence of graph geometry on the scrambling rate. At least three important open questions in this direction remain:
\begin{itemize}
    \item Can our conjectures be proven?
    \item Can our complicated measure of connectedness be replaced by some simpler property of the graph, e.g. the ratio of the minimum cut to the total edge count?
    \item Can these observations about graphs be extended to arbitrary arrangements of gates, similar to the conjectures discussed in ref.~\cite{belkin_approximate_2024}?
\end{itemize}

For brickworks, we give an equation for the approximate $2$-design depth which appears to be quite accurate in practice. Here the most important remaining question is how the brickwork behaves at $t > 2$. In addition, of course, one would like to prove the correctness of our semi-empirical formula. 

The fast architectures we study seem to scramble much faster than other known ensembles. One interesting question is whether there exists any nicely-structured fastest ensemble, either with or without the restriction to Haar-random local gates \cite{suzuki_more_2025}. In practice it would of course be useful to know the quickest route to an approximate design on a modest-sized quantum device. 

Finally, we show that recent results on state designs from anticoncentration are likely to be difficult to extend to the unitary case. The ratio between collisional and multiplicative errors can grow arbitrarily large. On the other hand, Theorem \ref{thm:multerr_expdef} gives a conceptual connection between anticoncentration and approximate $2$-design-ness. Perhaps this result will offer an alternative route to establishing a log-depth bound for suitably structured circuits.

\paragraph{Acknowledgements} 
D.B. acknowledges discussions with Nicholas LaRacuente and Aram Harrow which persuaded him that numerical results on $2$\textsuperscript{nd} moments were worth pursuing. He also acknowledges extensive discussions with Markus Heinrich and Jonas Haferkamp which led to clarifications of Theorem \ref{thm:multerr_expdef} and motivated Section \ref{sec:anticoncentration}, and thanks them for sharing earlier drafts of ref.~\cite{heinrich_anti-concentration_2026}. J.A. is supported by a postdoctoral fellowship as part of the niversit\'e de Montr\'eal and the Institut Courtois.  This material is based upon work supported by the U.S. Department of Energy, Office of Science, National
Quantum Information Science Research Centers.

\printbibliography

\begin{appendices}
\section{Multiplicative errors at $t = 2$} \label{app:multiplicative_errors}
\subsection{Basic setup}
Suppose we have $n$ sites, each with a local Hilbert space of dimension $q$. We have some distribution $\varepsilon$ over the unitary group of which we are studying the $t$th moment. Later we will specialize to $t = 2$.

The $t$th moment operator of some distribution $\varepsilon$ over the unitary group is a quantum channel given by 
\begin{gather}
        \Phi_\varepsilon^{(t)}(\rho) = \mathbb{E}_{U \sim \varepsilon}\left[(U^\dagger)^{\otimes t} \! \rho (U)^{\otimes t}\right]
\end{gather}
The multiplicative distance between two channels $A,B$ is defined to be the smallest \(\epsilon\) such that $(1 + \epsilon) B - A$ and $A - (1 - \epsilon) B$ are both completely positive maps. 

We'll also use the Choi isomorphism, 
\begin{gather}
    \choi(\mathcal{N}) = \left[\mathcal{N} \otimes \mathcal{I}\right]\left(\frac{1}{d} \sum_{i=1}^d \ket{i} \otimes \ket{i} \sum_{i=1}^d \bra{i} \otimes \bra{i}\right)
\end{gather}
Since complete positivity of a channel is equivalent to positive semidefiniteness of the corresponding Choi state, we can rephrase this as the smallest $\epsilon$ such that 
\begin{gather}
    (1 + \epsilon) \choi(B) \succeq \choi(A) \succeq (1-\epsilon)\choi(B)
\end{gather}
We will show that this expression becomes especially simple for second moment operators.

\subsection{From Choi Positivity to Likelihood Ratios}
Define functions $a(\mathbf{v}) = \mathbf{v}^\dagger \choi(A) \mathbf{v}$ and likewise for $B$. The condition
\begin{gather}
    (1 + \epsilon) \choi(B) \succeq \choi(A) \succeq (1-\epsilon)\choi(B)
\end{gather}
is then equivalent to 
\begin{gather}
    (1 + \epsilon) b(\mathbf{v}) \geq a(\mathbf{v}) \geq (1-\epsilon)b(\mathbf{v})
\end{gather}
which implies
\begin{gather}
    \epsilon  = \max_{\mathbf{v}} \left|\frac{a(\mathbf{v})}{b(\mathbf{v})} - 1\right|
\end{gather}
We now show 
\begin{gather}
    \label{eq:choi_ratio_target}
    \max_{\mathbf{v}} \left|\frac{a(\mathbf{v})}{b(\mathbf{v})} - 1\right| = \max_{\rho, \Pi} \left |\frac{\tr \left(\Pi \left[A \otimes I\right](\rho)\right)}{\tr \left(\Pi \left[B \otimes I\right](\rho) \right)} - 1\right |
\end{gather}
The Choi operator acts on two copies of the Hilbert space. We decompose $\mathbf{v} = \sum_{i,j} v_{ij} \ket{i} \otimes \ket{j}$. If we then choose $\rho_{ijkl} \propto v_{ij} v_{kl}$ and $\Pi_{ijkl} = \delta_{ij} \delta_{kl}$, we have
\begin{gather}
    a(\mathbf{v}) = \tr \left(\Pi [A \otimes I](\rho)\right)
\end{gather}
which establishes that the left-hand side of Eq.~\ref{eq:choi_ratio_target} is no larger than the right-hand side. On the other hand, given an arbitrary $\rho$ and $\Pi$, we may by convexity find rank-1 $\rho' = \bra{\psi}\ket{\psi}$ and $\Pi' = \bra{\phi}\ket{\phi}$ such that 
\begin{gather}
    \frac{\tr \left(\Pi \left[A \otimes I\right](\rho)\right)}{\tr \left(\Pi \left[B \otimes I\right](\rho) \right)} \leq \frac{\tr \left(\Pi' \left[A \otimes I\right](\rho')\right)}{\tr \left(\Pi' \left[B \otimes I\right](\rho') \right)}
\end{gather}
and similarly may find other $\rho''$, $\Pi''$ which give a lower bound. We may then take $\mathbf{v} = \ket{\psi} \otimes \ket{\phi}$, which proves that the right-hand side of Eq.~\ref{eq:choi_ratio_target} is no larger than the left. It follows that they must be equal.

\subsection{Choi isomorphism on permutation basis}

\begin{figure}[h]
    \centering
    \begin{tikzpicture}
        \begin{scope}
            \node[anchor=north west,inner sep=0] (image_a) at (0,0)
            {\includegraphics[width=0.75\columnwidth]{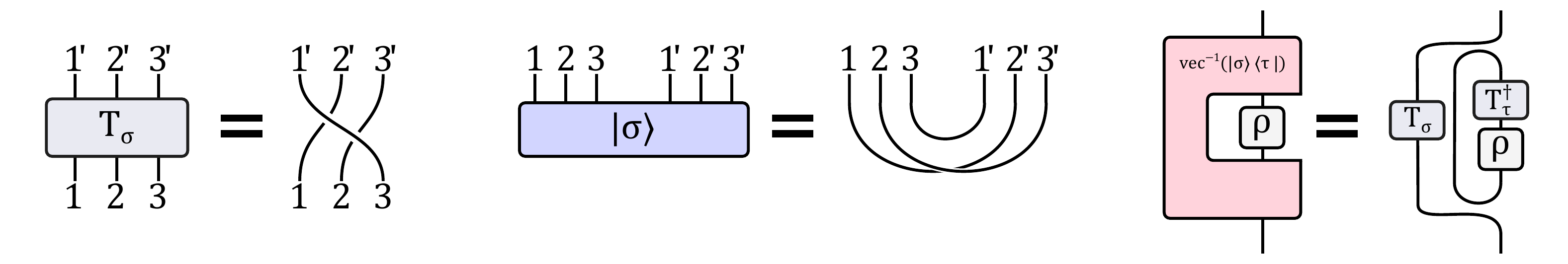}};
            \node [anchor=north west] (note) at (-0.1,-0.2) {\small{\textbf{a)}}};
            \node [anchor=north west] (note) at (3.8,-0.2) {\small{\textbf{b)}}};
            \node [anchor=north west] (note) at (9.2,-0.2) {\small{\textbf{c)}}};
        \end{scope}
    \end{tikzpicture}
    \vspace*{-0.4cm}
    \begin{tikzpicture}
        \begin{scope}
            \node[anchor=north west,inner sep=0] (image_a) at (0,0)
            {\includegraphics[width=0.4\columnwidth]{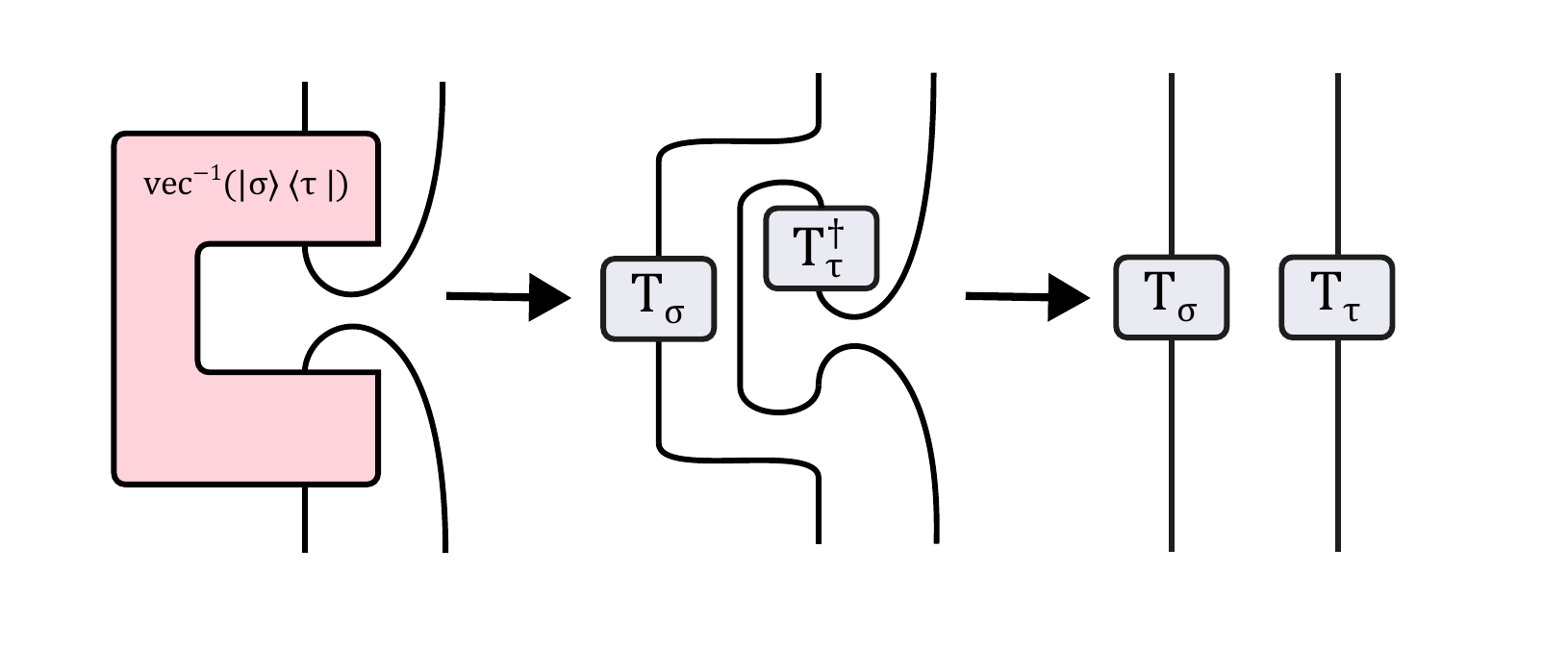}};
            \node [anchor=north west] (note) at (0,-0.25) {\small{\textbf{d)}}};
        \end{scope}
    \end{tikzpicture}
    \vspace*{-0.2cm}
    \caption{Tensor network depiction of a permutation operator $T_\sigma$ (a) and the corresponding permutation state $\ket{\sigma}$ (b) on a three-copy Hilbert space, with $\sigma = (123)$. (c): Tensor network depiction of the channel $\vectorize^{-1}(\ket{\sigma}\bra{\tau})$ acting on an arbitrary density matrix $\rho$, as per Eq~\ref{eq:inverse_vectorized_channel}. (d): Choi isomorphism of the channel $\vectorize^{-1}(\ket{\sigma}\bra{\tau})$, which can be decomposed into $T_\sigma \otimes T_\tau$.}
    \label{fig:permutation_states_choi}
\end{figure}

Circuits composed of Haar-random gates induce a measure $\varepsilon$ which is invariant under single-site unitaries. It follows that the image of the moment operator lies in the single-site commutant of $\mathcal{U}(q)^{\otimes t}$. This subspace is spanned by the permutation operators
\begin{gather}
T_\sigma = \sum_{\vec{i} \in \mathbb{Z}_q^t} \ket{\sigma(\vec{i})} \bra{\vec{i}}
\end{gather}
We will work with the vectorization map $\vectorize$, under which (Fig.~\ref{fig:permutation_states_choi}a-b) 
\begin{gather}
\ket{\sigma} \equiv \vectorize(T_\sigma)   = \frac{1}{\sqrt{q}^t} \sum_{\vec{i} \in \mathbb{Z}_q^t} \ket{\vec{i}} \otimes \ket{\sigma(\vec{i})}
\end{gather}
In this basis, the vectorization of the channel corresponds to some matrix, i.e. 
\begin{gather}
    \vectorize(\Phi) = \sum_{\sigma_1 \dots \sigma_N, \tau_1 \dots \tau_N} M_{\sigma_1 \dots \sigma_N, \tau_1 \dots \tau_N}\ket{\sigma_1 \dots \sigma_N} \bra{\tau_1 \dots \tau_N}
\end{gather}
for some coefficients $M$. By linearity, the corresponding Choi state may then be expressed as
\begin{gather}
    \choi(\Phi) = \sum_{\sigma_1 \dots \sigma_N, \tau_1 \dots \tau_N} M_{\sigma_1 \dots \sigma_N, \tau_1 \dots \tau_N} \choi \circ \vectorize^{-1}\left(\ket{\sigma_1 \dots \sigma_N} \bra{\tau_1 \dots \tau_N}\right)
\end{gather}

How does the Choi map act on these permutation basis states? On a single site, we have
\begin{align}
    \left[\vectorize^{-1}\left(\ket{\sigma} \bra{\tau}\right)\right](\rho) &= T_\sigma \tr(T_{\tau}^\dagger \rho) \label{eq:inverse_vectorized_channel}
\end{align}
and so (Fig.~\ref{fig:permutation_states_choi}d)
\begin{align}
    \choi \circ \vectorize^{-1}\left(\ket{\sigma} \bra{\tau}\right) &= (T_\sigma  \otimes I) \tr_A\left[ (T_\tau^\dagger \otimes I) \left(\frac{1}{q^t} \sum_{j=1}^{q^t} \ket{j} \otimes \ket{j} \sum_{k=1}^{q^t} \bra{k} \otimes \bra{k}\right)\right]
    \\
    &= \frac{1}{q^{2t}} \sum_{\vec{i},\vec{j},\vec{k} \in \mathbb{Z}_q^t} (\rho_\sigma \otimes I)\tr_A\left[\ket{\vec{i}}\bra{\tau(\vec{i})}\ket{\vec{j}\vec{j}}\bra{\vec{k}\vec{k}} \right]
    \\
    &= \frac{1}{q^{2t}} \sum_{\vec{i},\vec{j},\vec{k} \in \mathbb{Z}_q^t} (\rho_\sigma  \otimes I) \braket{\vec{k}|\vec{i}} \braket{\tau(\vec{i})|\vec{j}} \left(I \otimes \ket{\vec{j}}\bra{\vec{k}}\right)
    \\
    &= \frac{1}{q^{2t}} (\rho_\sigma  \otimes I) \left(I \otimes \sum_{\vec{i} \in \mathbb{Z}_q^t} \ket{\tau(\vec{i})}\bra{\vec{i}}\right)
    \\
    &= T_\sigma  \otimes T_\tau
\end{align}
More generally, on multiple sites, we see
\begin{gather}
    \choi\circ \vectorize^{-1}\left(\ket{\sigma_1 \dots \sigma_N} \bra{\tau_1 \dots \tau_N}\right) = \bigotimes_i (T_{\sigma_i} \otimes T_{\tau_i})
\end{gather}
so the Choi state may be represented as a linear combination of products of twist operators.

\subsection{Decomposition of Choi state into irreps}
These twist operators are a representation of the symmetric group $S_t$. It follows that the Choi state belongs to a representation of the algebra $S_t^{2N}$. The eigenspaces of any such algebra element may be decomposed into irreducible representations. In particular, if $V_\nu$ is an irreducible representation of $S_t$ labeled by a partition $\nu$, we may decompose the twist operator into irreps 
\begin{gather}
    T_\sigma \cong \bigoplus_{\nu \vdash t, |\nu| \leq q} I_{r_\nu} \otimes V_\nu
\end{gather}
for some multiplicities $r_\nu$. 

Let $\alpha_i,\beta_i,  i \in \{1...N\}$ label a set of $2N$ irreducible representations of $S_t$, with corresponding representations $V_{\alpha_i}(\sigma)$. Then the eigenvalues of $ \choi(\Phi_\varepsilon) - (1 \pm \epsilon)\choi(\Phi_\text{Haar})$ are precisely the eigenvalues of 
\begin{gather}
    \sum_{\vec{\sigma},\vec{\tau}} (M^\varepsilon_{\vec{\sigma},\vec{\tau}} - (1 \pm \epsilon)M^\text{Haar}_{\vec{\sigma},\vec{\tau}}) \bigotimes_i \left(V_{\alpha_i}(\sigma_i) \otimes V_{\beta_i}(\tau_i)\right)
\end{gather}
It follows that 
$(1+\epsilon)\choi(\Phi_\text{Haar}) \succeq \choi(\Phi_\varepsilon) \succeq (1-\epsilon)\choi(\Phi_\text{Haar})$ if and only if 
\begin{gather}
    \sum_{\vec{\sigma},\vec{\tau}} (M^\varepsilon_{\vec{\sigma},\vec{\tau}} - (1 - \epsilon)M^\text{Haar}_{\vec{\sigma},\vec{\tau}}) \bigotimes_i \left(V_{\alpha_i}(\sigma_i) \otimes V_{\beta_i}(\tau_i)\right) \succeq 0
\end{gather}
and
\begin{gather}
    \sum_{\vec{\sigma},\vec{\tau}} ((1 + \epsilon)M^\text{Haar}_{\vec{\sigma},\vec{\tau}} - M^\varepsilon_{\vec{\sigma},\vec{\tau}}) \bigotimes_i \left(V_{\alpha_i}(\sigma_i) \otimes V_{\beta_i}(\tau_i)\right) \succeq 0
\end{gather}
for every choice of $\vec{\alpha},\vec{\beta}$. 

\subsection{Specializing to $t = 2$}
In the case $t = 2$, there are only two permutations, which we label $I$ and $S$. There are also only two irreps, trivial and sign. Both are of dimension one. It follows that 
\begin{gather}
    \bigotimes_i \left(V_{\alpha_i}(\sigma_i) \otimes V_{\beta_i}(\tau_i)\right) = \prod_{i=1}^N a_i^{|\sigma_i|} b_{i}^{|\tau_i|}
\end{gather}
where $a_i, b_i \in \{-1,1\}$ now label the sign and trivial irreps on site $i$. Because the irreps are all 1D, we immediately obtain the eigenvalues of $\choi(\Phi_\varepsilon) - (1 \pm \epsilon)\choi(\Phi_\text{Haar})$ as
\begin{gather}
    \sum_{\vec{\sigma},\vec{\tau}} (M^\varepsilon_{\vec{\sigma},\vec{\tau}} - (1 \pm \epsilon)M^\text{Haar}_{\vec{\sigma},\vec{\tau}}) \prod_{i=1}^N a_i^{|\sigma_i|} b_{i}^{|\tau_i|}
\end{gather}
for any particular choice of $\vec{a},\vec{b}$. Hence, the positive-semidefinite condition becomes the scalar condition that 
\begin{gather} \label{eq:intermediate_psd_condition}
    \sum_{\vec{\sigma},\vec{\tau}} (M^\varepsilon_{\vec{\sigma},\vec{\tau}} - (1 - \epsilon)M^\text{Haar}_{\vec{\sigma},\vec{\tau}}) \prod_{i=1}^N a_i^{|\sigma_i|} b_{i}^{|\tau_i|} \geq 0 \geq \sum_{\vec{\sigma},\vec{\tau}} (M^\varepsilon_{\vec{\sigma},\vec{\tau}} - (1 + \epsilon)M^\text{Haar}_{\vec{\sigma},\vec{\tau}}) \prod_{i=1}^N a_i^{|\sigma_i|} b_{i}^{|\tau_i|}
\end{gather}
for all $\vec{a},\vec{b}$. With some algebra we may rearrange this condition to 
\begin{gather}
     \epsilon \geq \left|\frac{\sum_{\vec{\sigma},\vec{\tau}}M^\mathcal{E}_{\vec{\sigma},\vec{\tau}}\prod_{i=1}^N a_i^{|\sigma_i|} b_{i}^{|\tau_i|}}{\sum_{\vec{\sigma},\vec{\tau}}M^\text{Haar}_{\vec{\sigma},\vec{\tau}}\prod_{i=1}^N a_i^{|\sigma_i|} b_{i}^{|\tau_i|}} - 1\right|
\end{gather}
provided the corresponding eigenvalue of $\choi(\Phi_\text{Haar})$ was nonzero - however, if it was, the original positive semidefiniteness of $\choi(\Phi_\varepsilon)$ would automatically satisfy Equation~\ref{eq:intermediate_psd_condition} for all $\epsilon$. Therefore, we see that the multiplicative error is given by
\begin{gather} \label{eq:mult_error_as_eigenvalue_ratio}
    \epsilon = \max_{\vec{a},\vec{b} \in \{-1,1\}^{N}} \left|\frac{\sum_{\vec{\sigma},\vec{\tau}}M^\mathcal{E}_{\vec{\sigma},\vec{\tau}}\prod_{i=1}^N a_i^{|\sigma_i|} b_{i}^{|\tau_i|}}{\sum_{\vec{\sigma},\vec{\tau}}M^\text{Haar}_{\vec{\sigma},\vec{\tau}}\prod_{i=1}^N a_i^{|\sigma_i|} b_{i}^{|\tau_i|}} - 1\right|
\end{gather}

\subsection{Cobasis and tensor network view}
We now show that the multiplicative error corresponds to the largest matrix element of the moment operator in a particular basis. Following ref. \cite{allen_conditional_2025}, we define the \textbf{permutation cobasis} as
\begin{gather}
    \ket{\widetilde{\sigma}} = q^t\sum_{\tau} \text{Wg}(\sigma^{-1}\tau)\ket{\tau}
\end{gather}
so that (so long as $t \leq q$) we have
\begin{gather}
    \braket{\tau|\widetilde{\sigma}} = \delta_{\sigma \tau}
\end{gather}
Let us now define the family of (unnormalized) states
\begin{gather}
    \ket{\Psi(\vec{a})} = \bigotimes_{i=1}^N \ket{\widetilde{I}} + a_i \ket{\widetilde{S}}
\end{gather}
so that
\begin{gather}
\braket{\Psi(\vec{a})|\sigma_1...\sigma_N} = \prod_{i=1}^N a_i^{|\sigma_i|}
\end{gather}
and we may write
\begin{gather}
\sum_{\vec{\sigma},\vec{\tau}}M^\varepsilon_{\vec{\sigma},\vec{\tau}}\prod_{i=1}^N a_i^{|\sigma_i|} b_{i}^{|\tau_i|} = 
    \bra{\Psi(\vec{a})} \vectorize(\Phi_\varepsilon)\ket{\Psi(\vec{b})}
\end{gather}
This form is convenient both for numerical calculations and for theoretical analysis of scaling with depth, since a deeper circuit corresponds to a power of $\vectorize(\Phi_\varepsilon)$. For any given arrangement of gates, one may express $\vectorize(\Phi_\varepsilon)$ as a tensor network made up of local moment operators. The product state $\ket{\Psi(\vec{a})}$ then corresponds to a boundary condition for that network. 

\subsection{Eigenvalues of $\choi \left(\Phi_\text{Haar}\right)$}
For the Haar measure, 
\begin{gather}
    M^\text{Haar}_{\vec{\sigma},\vec{\tau}} = \delta_{\sigma_1...\sigma_N} \delta_{\tau_1...\tau_N} q^{Nt}\text{Wg}(\sigma_1 \tau_1^{-1}, q^N)
\end{gather}
so only $t!^2$ out of the $t!^{2N}$ entries are nonzero. We may compute
\begin{align}
    \bra{\Psi(\vec{a})} \vectorize(\Phi_\text{Haar})\ket{\Psi(\vec{b})} 
    &=  M^\text{Haar}_{I^{\otimes n},I^{\otimes n}} + 
    \left(\prod_i a_i\right)M^\text{Haar}_{S^{\otimes n},I^{\otimes n}} + 
     \left(\prod_i b_i\right)M^\text{Haar}_{I^{\otimes n},S^{\otimes n}} + 
     \left(\prod_i a_i b_i\right)M^\text{Haar}_{S^{\otimes n},S^{\otimes n}}
     \\ &= 
     \left(1 \!+\! (-1)^{\mathcal{P}\left(\vec{a}\right) + \mathcal{P}\left(\vec{b}\right)}\right)q^{Nt}\text{Wg}(I, q^N)\! + \!\left((-1)^{\mathcal{P}\left(\vec{a}\right)} \!+ \!(-1)^{\mathcal{P}\left(\vec{b}\right)}\right)q^{Nt}\text{Wg}(S, q^N)
\end{align}
where $\mathcal{P}(\vec{x})$ is the parity of $\sum_i x_i$. Therefore,
\begin{align} \label{eq:haar_definition}
    \bra{\Psi(\vec{a})} \vectorize(\Phi_\text{Haar})\ket{\Psi(\vec{b})} &= \begin{cases}
         \frac{2}{1 + q^{-N}} & \mathcal{P}\left(\vec{a}\right) = \mathcal{P}\left(\vec{b}\right) = 0\\
         \frac{2}{1 - q^{-N}} & \mathcal{P}\left(\vec{a}\right) = \mathcal{P}\left(\vec{b}\right) = 1 \\
         0 & \mathcal{P}\left(\vec{a}\right) \ne \mathcal{P}\left(\vec{b}\right)
     \end{cases}
\end{align}
Note that $\ket{\Psi}$ is not normalized. The corresponding physical outcome probabilities after normalization are $\frac{2}{q^n \pm 1}$. 

\subsection{$\vectorize\left(\Phi_\varepsilon - \Phi_\text{Haar}\right)$ is often PSD} \label{app:psd_circuits}
Suppose for now that $\vectorize \Phi_\varepsilon$ is positive-semidefinite. $\Phi_\text{Haar}$ is an orthogonal projector in to (a subspace of) the unit eigenspace of $\vectorize \Phi_\varepsilon$ , so the eigenvalues of $\vectorize\left(\Phi_\varepsilon - \Phi_\text{Haar}\right)$ are exactly those of $\vectorize \Phi_\varepsilon$, except that two of the $1$ eigenvalues have been replaced by $0$ \cite{allen_conditional_2025}. Clearly this remains positive semidefinite. 

Now, the assumption of positive-semidefiniteness doesn't hold for arbitrary circuits. We will show, however, that it holds for each class of circuits studied here. 

\paragraph{Graphs} The single-gate moment operator is an orthogonal projector into the single-site commutant, so it is PSD. A tensor product of a PSD operator with the identity is also PSD. The full moment operator for a graph-sampled architecture is a convex combination of such operators, so it is also PSD. 

\paragraph{Brickwork (odd depths)}
The vectorized moment operator of the brickwork architecture may be written $(L_O L_E)^d$, where $L_O$ and $L_E$ are the odd and even layers, respectively. Each layer is an orthogonal projector, so e.g. $L_O^2 = L_O^\dagger = L_O$. Suppose $d = 2k + 1$. Then we may write 
\begin{align}
    (L_O L_E)^d &= (L_O L_E)^k (L_O L_E)^k L_O 
    \\ &= (L_O L_E)^k L_O^2 (L_E L_O)^k
    \\ &= \left[(L_O L_E)^k L_O\right] \left[(L_O L_E)^k L_O\right]^\dagger
\end{align}
which is of the form $X^\dagger X$ and so positive semi-definite.

\paragraph{Fast architectures}
For the PCG architecture, the argument is essentially the same as for graphs. A single layer gives a moment operator which is a convex combination of tensor products of projectors, and a deeper circuit is just a power of the single-layer case. 

The PBFE architecture is more similar to the case of brickwork. We may duplicate each even layer to obtain a composition of 3-layer circuits. Each 3-layer circuit is a convex combination of 3-layer brickworks, so it is PSD. Consecutive 3-layer circuits are sampled independently, so their composition is also PSD. 

The trickiest case is the Permuted Brickwork. In this case consecutive layers are not independent. We use instead the following argument: Consider a periodic brickwork architecture with an odd number of layers. Condition on a particular choice for the middle layer. After conditiong on the layout of the middle layer, the first and second halves of the circuit are independent of each other, and their distributions are related by inversion. It follows that we may write the moment operator for a $k$-layer permuted brickwork as
\begin{gather}
    \vectorize \Phi_{\text{PB } k} = \mathbb{E}_{\text{middle}} \left[\vectorize \left(\Phi_{\text{PB } \frac{k - 1}{2}}\right)^{\dagger} \vectorize \left(\Phi_{\text{middle}} \right)\vectorize \left(\Phi_{\text{PB } \frac{k - 1}{2}} \right)\right]
\end{gather}
This is again a convex combination of PSD matrices.

\subsection{Diagonal dominance}
We now show that when $\vectorize\left(\Phi_\varepsilon - \Phi_\text{Haar}\right)$ is PSD, a case $\vec{a} = \vec{b}$ dominates the multiplicative error. 

The largest element of a PSD matrix occurs on the diagonal, and all diagonal elements are always positive. This is not quite enough to establish the desired result, since Equation \ref{eq:mult_error_as_eigenvalue_ratio} involves a ratio of elements. We must first split the error into
\begin{gather}
    \epsilon_\text{even} = \frac{1 + q^{-N}} {2}\max_{\vec{a},\vec{b} \text{ even}} \left|\bra{\Psi(\vec{a})} \vectorize\left(\Phi_\varepsilon - \Phi_\text{Haar}\right) \ket{\Psi(\vec{b})}\right|
\end{gather}
and
\begin{gather} 
    \epsilon_\text{odd} = \frac{1 - q^{-N}}{2} \max_{\vec{a},\vec{b} \text{ odd}}  \left|\bra{\Psi(\vec{a})} \vectorize\left(\Phi_\varepsilon - \Phi_\text{Haar}\right) \ket{\Psi(\vec{b})}\right|
\end{gather}
so that $\epsilon = \max(\epsilon_\text{even}, \epsilon_\text{odd})$. 

Then by the fact above, the maxima in  $\epsilon_\text{even}$ and $\epsilon_\text{odd}$ are saturated by $\vec{a} = \vec{b}$. Since the diagonal elements are all positive, we may now drop the absolute value to write
\begin{gather}
    \mathcal{M}\left(\Phi_\varepsilon, \Phi_\text{Haar}\right) = \max_{p \in \text{even, odd}} \left[\frac{1 + (-1)^p q^{-N}}{2} \max_{\mathcal{P}(\vec{a}) = p}\bra{\Psi(\vec{a})} \vectorize\left(\Phi_\varepsilon - \Phi_\text{Haar}\right) \ket{\Psi(\vec{a})}\right]
\end{gather}
This is the core formula on which we will rely for our computations. Theorem~\ref{thm:multerr_expdef} follows after re-replacing the Haar channel eigenvalue from Equation~\ref{eq:haar_definition} and inserting an extra copy of $\langle \Psi(\vec{a})|\Psi(\vec{a})\rangle = 1$.

\subsection{Experimental interpretation}
We can undo the vectorization map to recover one experimental interpretation of this formula. We have
\begin{align}
    \vectorize^{-1}\left(\ket{\widetilde{I}} \pm \ket{\widetilde{S}}\right) &\propto \left(T_I - \frac{1}{q} T_S\right) \pm \left(T_S - \frac{1}{q} T_I\right) \\
    &\propto 
     \left(1 \mp \frac{1}{q}\right)\left(T_I \pm T_S\right)
     \\
    \vectorize^{-1}\left(\ket{\Psi(\vec{a})}\right) &\propto \bigotimes_i \left(I + a_i T_S\right) \\
\end{align}
This is not a physical density matrix, since it is not normalized, but we can freely multiply by scalars without changing the ratio which appears in $\mathcal{M}$. 

The corresponding states, however, are more complicated than necessary. Rather than preparing a density matrix proportional to $I \pm T_S$, we may prepare any state whose projection into the commutant of the single-site unitary group is proportional to $I \pm T_S$. One may show that $\frac{I \pm S}{2}$ are a pair of commuting, orthogonal projectors into the symmetric and antisymmetric subspaces, respectively, under exchange of the two copies of the Hilbert space. In order to prepare $\ket{\Psi(\vec{a})}$, it thus suffices to prepare any state that is symmetric on sites where $a_i = +1$ and antisymmetric elsewhere. 

A simple choice of such states is $\ket{00}$ and $\ket{01} - \ket{10}$, respectively. If we prepare these states and make the corresponding projective measurements after the channel has acted, we obtain an experiment which saturates the multiplicative error. 

\subsection{Bound for tenuously-connected structures}
\label{app:disconnected}
We first prove a general result about disconnected circuits. Consider a nondeterministic architecture $\mathcal{E}$ satisfying the assumptions of Theorem \ref{thm:multerr_expdef}. Suppose there exists a cut $C$ with $m$ qudits on one side and $n-m$ on the other, such that with probability $p$ no gate crosses $C$ .
\begin{lemma}
    An ensemble $\Phi_\mathcal{E}$ which is disconnected with probability $p$ has multiplicative error at least
    \begin{gather}
    \mathcal{M}\left(\Phi_\mathcal{E}, \Phi_\text{Haar}\right) \geq  
    p\frac{(q^m + 1)(q^m + q^n)}{(q^m - 1)(q^n - q^m)}
    \end{gather}
\end{lemma}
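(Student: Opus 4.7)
The plan is to apply Theorem~\ref{thm:multerr_expdef} to a cleverly chosen experiment $\vec{a}$ and then exploit the disconnected event to lower bound the associated ratio by a factored Haar expectation on the two sides.

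First I would decompose $\Phi_\mathcal{E} = p\,\Phi_\text{disc} + (1-p)\,\Phi_\text{conn}$, where $\Phi_\text{disc}$ is the conditional moment operator given that no gate crosses $C$. On the disconnected event, the circuit is a tensor product $U_A \otimes U_B$; because the Haar-random gate values are independent of the positions, $\Phi_\text{disc}$ is a convex combination of products $\Phi_A^{(k_A)} \otimes \Phi_B^{(k_B)}$ over the split $(k_A,k_B)$ of the $s$ gates between the two sides. By the same convex-combination and tensor-with-identity argument used in Appendix~\ref{app:psd_circuits}, each $\vectorize\bigl(\Phi_X^{(k_X)} - \Phi_{X,\text{Haar}}\bigr)$ is PSD, so for any $\rho_{\vec{a}} = \rho_A \otimes \rho_B$ that factors across the cut,
\[
\tr\bigl[\rho_{\vec{a}}\,\Phi_\text{disc}(\rho_{\vec{a}})\bigr] \;\geq\; \tr\bigl[\rho_A\,\Phi_{A,\text{Haar}}(\rho_A)\bigr]\;\tr\bigl[\rho_B\,\Phi_{B,\text{Haar}}(\rho_B)\bigr].
\]
Separately, $\Phi_\text{Haar}$ lies in the unit eigenspace of every individual realization's moment operator (since every local Haar projector fixes the global commutant) and hence of $\Phi_\text{conn}$; combined with the PSD vectorization inherited from the overall ensemble, this yields $\tr[\rho_{\vec{a}}\Phi_\text{conn}(\rho_{\vec{a}})] \geq \tr[\rho_{\vec{a}}\Phi_\text{Haar}(\rho_{\vec{a}})]$.

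Next I would choose $\vec{a}$ with exactly one singlet on side $A$ and one singlet on side $B$ (and $\ket{00}$ elsewhere), so the partial parities are $p_A = p_B = 1$ and the total parity is even; such a choice exists whenever $m, n-m \geq 1$. Applying Equation~\ref{eq:haar_definition} to each side and to the global Haar channel, and writing $x = q^{-m},\; y = q^{-(n-m)}$, the ratio
\[
R \;=\; \frac{\tr[\rho_A\,\Phi_{A,\text{Haar}}(\rho_A)]\,\tr[\rho_B\,\Phi_{B,\text{Haar}}(\rho_B)]}{\tr[\rho_{\vec{a}}\,\Phi_\text{Haar}(\rho_{\vec{a}})]} \;=\; \frac{2(1+xy)}{(1-x)(1-y)}
\]
satisfies $R - 1 = (1+x)(1+y)/[(1-x)(1-y)]$, which after clearing powers of $q$ becomes $(q^m+1)(q^n+q^m)/[(q^m-1)(q^n-q^m)]$, exactly the target. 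Assembling the two bounds gives $\tr[\rho_{\vec{a}}\Phi_\mathcal{E}(\rho_{\vec{a}})]/\tr[\rho_{\vec{a}}\Phi_\text{Haar}(\rho_{\vec{a}})] \geq pR + (1-p)$, so Theorem~\ref{thm:multerr_expdef} delivers $\mathcal{M} \geq p(R-1)$ as claimed.

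The main technical obstacle is the auxiliary inequality $\tr[\rho_{\vec{a}}\Phi_\text{conn}(\rho_{\vec{a}})] \geq \tr[\rho_{\vec{a}}\Phi_\text{Haar}(\rho_{\vec{a}})]$: although $\Phi_\text{Haar}$ is plainly fixed by every individual realization, $\Phi_\text{conn}$ is not a simple power or convex combination of projectors, so PSDness on the orthogonal complement of the Haar commutant does not follow automatically from Appendix~\ref{app:psd_circuits}. The cleanest route is likely a telescoping expansion $P_\text{avg}^s - \alpha^s P_\text{avg,nc}^s = (1-\alpha)\sum_{k=0}^{s-1} \alpha^k P_\text{avg}^{s-1-k} P_\text{avg,cr} P_\text{avg,nc}^k$, combined with the Hermiticity of the conditional ensemble inherited from $U\to U^\dagger$ invariance of the Haar measure, to control the spectrum of $\Phi_\text{conn}$ on that complement. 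If that step resists tightening, one still obtains the slightly weaker $\mathcal{M} \geq pR - 1$ by merely using $\tr[\rho_{\vec{a}}\Phi_\text{conn}(\rho_{\vec{a}})] \geq 0$, which is asymptotically indistinguishable from $p(R-1)$ whenever $R$ is exponentially large (as it is for the bridge graph).
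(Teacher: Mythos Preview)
Your overall route is the same one the paper takes: the same convex split $\Phi_\mathcal{E} = p\,\Phi_\text{disc} + (1-p)\,\Phi_\text{conn}$, the same test state (odd parity on each side of $C$, even total parity), and the same algebra turning $R-1 = \tfrac{(1+x)(1+y)}{(1-x)(1-y)}$ into $\tfrac{(q^m+1)(q^n+q^m)}{(q^m-1)(q^n-q^m)}$. Your handling of the disconnected piece is correct and in fact more explicit than the paper's: since every non-crossing gate fixes the four-dimensional product commutant, the factored Haar projector lies in the unit eigenspace of $\vectorize(\Phi_\text{disc})$, and PSD of $\vectorize(\Phi_\text{disc})$ (a power of a convex combination of orthogonal projectors) yields the inequality immediately.

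The real gap is exactly the one you flag. Conditioning on ``at least one gate crosses $C$'' does \emph{not} preserve the PSD structure used in Appendix~\ref{app:psd_circuits}: $\vectorize(\Phi_\text{conn}) = \tfrac{1}{1-p}\bigl(\vectorize(\Phi_\mathcal{E}) - p\,\vectorize(\Phi_\text{disc})\bigr)$ is Hermitian (by reversal symmetry of the edge sequence) but is a difference of PSD operators, so its restriction to the complement of the Haar commutant can have negative eigenvalues. Your telescoping expansion produces a sum of terms $P_\text{avg}^{\,s-1-k} P_c P_{nc}^{\,k}$, each a product of noncommuting PSD operators, which are neither individually PSD nor obviously nonnegative on the relevant diagonal; Hermiticity of the full sum does not control the sign of $\langle v_\perp,\,\cdot\,v_\perp\rangle$. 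So the inequality $\tr[\rho_{\vec a}\,\Phi_\text{conn}(\rho_{\vec a})] \ge \tr[\rho_{\vec a}\,\Phi_\text{Haar}(\rho_{\vec a})]$ cannot be established this way, and your fallback $pR-1$ is genuinely weaker than the stated bound $p(R-1)$.

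The paper avoids this entirely by working at the level of multiplicative error rather than individual quadratic forms. It invokes the monotonicity-under-composition result of ref.~\cite{belkin_absence_2025}: post-composing a moment channel with any further moment channel (here, global Haar on the connected event and factored Haar on the disconnected event) cannot increase $\mathcal{M}(\,\cdot\,,\Phi_\text{Haar})$. This yields in one stroke
\[
\mathcal{M}(\Phi_\mathcal{E},\Phi_\text{Haar}) \;\ge\; \mathcal{M}\bigl((1-p)\,\Phi_\text{Haar} + p\,(\Phi_{A,\text{Haar}}\otimes\Phi_{B,\text{Haar}}),\;\Phi_\text{Haar}\bigr),
\]
and the right-hand ensemble is manifestly locally invariant with PSD vectorization (a convex combination of orthogonal projectors), so Theorem~\ref{thm:multerr_expdef} applies to it directly and your computation of $R-1$ finishes the job. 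No separate control of $\Phi_\text{conn}$ is needed.
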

\begin{proof}
By conditioning on connectedness, we may decompose $\Phi_{\mathcal{E}} = (1 - p)\Phi_C + p\Phi_{\slashed{C}}$. Let us compose $\Phi_{\slashed{C}}$ with a tensor product of Haar-random unitaries acting on all the qudits on each side of the cut to obtain $\Phi_{\text{Haar } m} \otimes \Phi_{\text{Haar }(n-m)}$. By the results of ref.~\cite{belkin_absence_2025}, this composition cannot increase the multiplicative error. Similarly, we may compose $\Phi_C$ with $\Phi_\text{Haar}$ to obtain $\Phi_\text{Haar}$. This shows
\begin{gather}
    \mathcal{M}(\Phi_{\mathcal{E}}, \Phi_{\text{Haar } n}) \geq \mathcal{M}((1-p)\Phi_{\text{Haar } n} + p \Phi_{\text{Haar } m} \otimes \Phi_{\text{Haar }(n-m)}, \Phi_{\text{Haar } n})
\end{gather}
The rest of the proof is then a straightfoward application of Theorem \ref{thm:multerr_expdef}. We compute
\begin{gather}
    \tr \left(\rho_{\vec{a}} \left[\Phi_{\text{Haar, } m} \otimes \Phi_{\text{Haar, }(n-m)}\right](\rho_{\vec{a}})\right) = \frac{2}{1 + (-1)^{a_1} q^{-m}} \frac{2}{1 + (-1)^{a_2} q^{-(n-m)}}
\end{gather}
and
\begin{gather}
    \tr \left(\rho_{\vec{a}} \Phi_{\text{Haar, } n}(\rho_{\vec{a}})\right) = \frac{2}{1 + (-1)^{a_1 + a_2} q^{-n}}
\end{gather}
The ratio is maximized by the choice $a_1 = a_2 = 1$, giving
\begin{align}
    \mathcal{M} &\geq p\left(\frac{1 + q^{-n}}{2} \frac{2}{1 -q^{-m}} \frac{2}{1 -q^{-(n-m)}} - 1\right) \\
    & \geq
    p\frac{(q^m + 1)(q^m + q^n)}{(q^m - 1)(q^n - q^m)}
\end{align}
\end{proof}

For the special case $m = \frac{n}{2}$ this simplifies to 
\begin{gather}
    \mathcal{M}\left(\Phi_\mathcal{E}, \Phi_\text{Haar}\right) \geq  
    p\left(\frac{q^{n/2} + 1}{q^{n/2} - 1}\right)^2 \geq p
\end{gather}
We are now ready to prove Theorem \ref{thm:disconnected}
\begin{proof}
    The bridge graph has $2 {n/2 \choose 2} + 1 = n^2/4 - n/2 + 1$ edges, so we can lower-bound the probability it is connected after $s$ gates as 
\begin{gather}
    P(\text{Bridge is connected} \leq \left(1 - \frac{1}{\frac{n^2}{4} - \frac{n}{2} + 1}\right)^s = \left(1 + \frac{4}{n(n-2)}\right)^{-s}
\end{gather}
The circuit size required to reach multiplicative error $\epsilon$ is thus lower-bounded by
\begin{align}
    s(\epsilon) 
    &\geq 
    \frac{\log \frac{1}{\epsilon}}{\log\left(1 + \frac{4}{n(n-2)}\right)}
    \\ & \geq \frac{n(n-2)}{4} \log \frac{1}{\epsilon}
\end{align}
\end{proof}

\subsection{Scaling the depth} \label{app:scaling_depth}
Consider composing together $s$ unitaries sampled independently from a distribution $\varepsilon$. The corresponding moment operator vectorizes to $\vectorize(\Phi_\varepsilon)^s$. If we again suppose $\vectorize(\Phi_\varepsilon)$ is positive-semidefinite, it has (unique) eigenvalues $\lambda_i$ and projections into eigenspaces $P_i$. The dominant eigenvalue is $\lambda_0 = 1$, and if our ensemble eventually approaches the Haar measure, then the corresponding eigenspace $P_0$ must be exactly the image of $\vectorize(\Phi_\text{Haar})$. We see
\begin{gather}
    \bra{\Psi(\vec{a})} \left(\vectorize\left(\Phi_\varepsilon\right)^s - \vectorize \left(\Phi_\text{Haar}\right) \right)\ket{\Psi(\vec{a})} = \sum_{i > 0} \lambda_i^s \left|\left|P_i \ket{\Psi(\vec{a})}\right|\right|^2
\end{gather}
i.e. the scaling depends only on how the norm of $\ket{\Psi(\vec{a})}$ decomposes into the eigenspaces of the vectorized moment operator. Attempting to maximize this distance over all $\vec{a}$ then gives the expression (\ref{eq:mult_error_over_depth}) for the multiplicative error.

\subsection{Known lower bounds via anticoncentration}
\label{app:dalzell_bounds}
Here we rephrase two bounds from ref.~\cite{dalzell_random_2022} in terms of our notation. 

An architecture with collision probability $Z$ cannot be an $\epsilon$-approximate $2$-design for any $\epsilon < \frac{Z}{Z_H} - 1$. Theorem 5 of ref.~\cite{dalzell_random_2022} shows that for the brickwork, circuit with $s$ gates,
\begin{gather}
    Z \geq \frac{Z_H}{2} \exp\left(A e^{\log n - 2a \frac{s}{n}}\right)
\end{gather}
with $A = \frac{1}{8 ce}$ and $c = 3e^{10}$. 

A brickwork with $s$ gates has depth $\frac{2s}{n}$. We can rearrange the lower bound into 
\begin{align}
    d &\geq \frac{\log n + \log A - \log\left(\log\left(2(1 + \epsilon\right)\right)}{\log \frac{q^2 + 1}{2q}}
    \\ &\approx \frac{\log n - 13.81}{\log \frac{q^2 + 1}{2q}}
\end{align}
where in the second line we've taken $\epsilon \rightarrow 0$. Note that this bound is only nontrivial above $N \sim 10^6$. 

From Theorem 4 of ref.~\cite{dalzell_random_2022}, on the other hand, an arbitrary architecture composed of Haar-random $2$-site gates cannot anticoncentrate to accuracy $\epsilon$ unless the gate count satisfies
\begin{gather}
    \frac{2s}{n} \geq \frac{\log n - \log \frac{(q+1) \log \left(1 + 2 \epsilon\right)}{\log(q + 1)} }{\log (q^2 + 1)}
\end{gather}
We may relax this to
\begin{gather}
    \frac{2s}{n} \geq \frac{\log n + \log \frac{1}{\epsilon} - \log \frac{2(q+1)}{\log(q + 1)} }{\log (q^2 + 1)}
\end{gather}
or, taking $q = 2$, 
\begin{gather}
    \frac{2s}{n} \geq \log_5 \frac{n}{\epsilon} - 0.801
\end{gather}
This gives a lower bound on the $\epsilon$-approximate $2$-design depth for an arbitrary architecture.

\section{Counting connections} \label{app:connections_of_architectures}
\subsection{Defining connection count}
Section \ref{sec:graphs} defines connectedness in a somewhat subtle way. This definition is motivated by ref.~\cite{belkin_approximate_2024} and has a few convenient properties. A succinct statement is as follows:
\begin{definition}
    The \textbf{connection count} of a fixed random quantum circuit architecture $A$ is the largest number of \textit{connected blocks} into which any architecture \textit{equivalent} to $A$ can be divided.
\end{definition}
Here by a \textbf{connected block} of an architecture we mean a sequence of consecutive gates which form a connected graph over all of the qubits.  By \textit{fixed} we mean a particular arrangement of gates - the local unitaries themselves are permitted to be random, but their locations are not. So this definition itself doesn't apply to most of the architectures studied here. For a nondeterministic architecture, such as one sampled from a graph, we instead concern ourselves with the \textbf{mean connection count}, which is just the average connection count over all of its realizations. 

Two architectures are \textbf{equivalent} if they induce the same measure on the unitary group. Let us represent our architecture by an ordered list of pairs of qubits, each corresponding to a gate location. In practice we care about the following two rules:
\begin{enumerate}
    \item If two consecutive gates act in disjoint locations, then we can swap their ordering. 
    \item We may split any gate into two consecutive copies of itself. 
\end{enumerate}

For example, given four qubits labeled $a,b,c,d$, the following two architectures are equivalent:
$$ab, ad, bc$$
$$ab, ad, bc, ad$$

\subsection{Naive and greedy algorithms}
We do not know of a guaranteed way to compute the connection count, as defined above, since there may be many possible ways to rearrange and slice up an architecture. However, we can compute lower bounds. The first approach we consider is a naive algorithm which doesn't inspect equivalent architectures at all. We simply add gates to the current block until it becomes connected, then slice that block off and proceed. 

The results of this algorithm are relatively easy to analyze. For example, the naive mean connection count of the complete graph with $s$ gates on $n$ qubits is $\frac{2s}{n \log n + O(n)}$, by percolation. For the star and linear graphs, it's a coupon collector problem, so we get mean connection count $\frac{s}{n \log n + O(n)}$. This is illustrated for the linear graph in Figure \ref{fig:greedy_vs_naive}. 

On the other hand, we can make some attempt to use the three reduction rules above to reduce this number. We use a greedy algorithm, which proceeds as follows: 
\begin{itemize}
    \item Add gates to the current block until it becomes connected.
    \item For each gate in the last layer of the current block, i.e. each gate which commutes with every gate after it, check if it can be removed without disconnecting the block. If it can, remove it from the current block and add it to the next block.
    \item Duplicate the last layer of the current block and add it as the first layer of the next block.
\end{itemize} 
These reductions give in general much higher connection counts, since many gates can be used twice. For example, Figure \ref{fig:greedy_vs_naive} shows that the greedy algorithm finds $\sim \frac{s}{2.8 n}$ connections for a typical set of $s$ gates sampled from the linear graph on $n$ qubits, vs $\sim \frac{s}{n \log n}$ found by the naive algorithm.

The case of the linear graph illustrates why this more complicated definition of connectedness is interesting. The total circuit size required to form an approximate $2$-design appears to be $\sim 12.3 n \log n$, which corresponds to only $\sim 12.3 $ connections under a naive count. Under a greedy count, however, we find that the first connected block requires $\sim n \log n + \gamma n$ gates, but at very large $n$ we see subsequent blocks have only $\sim 2.6 n$ gates. This suggests that the total number of connections needed may be asymptotically closer to $\sim 4.3 \log n$. From Figure \ref{fig:basic_connections}, however, we can see that this scaling wouldn't kick in until around $100$ qubits. 

\begin{figure}[h]
    \centering
    \begin{tikzpicture}
        \begin{scope}
            \node[anchor=north west,inner sep=0] (image_a) at (0,0)
            {\includegraphics[width=0.6\columnwidth]{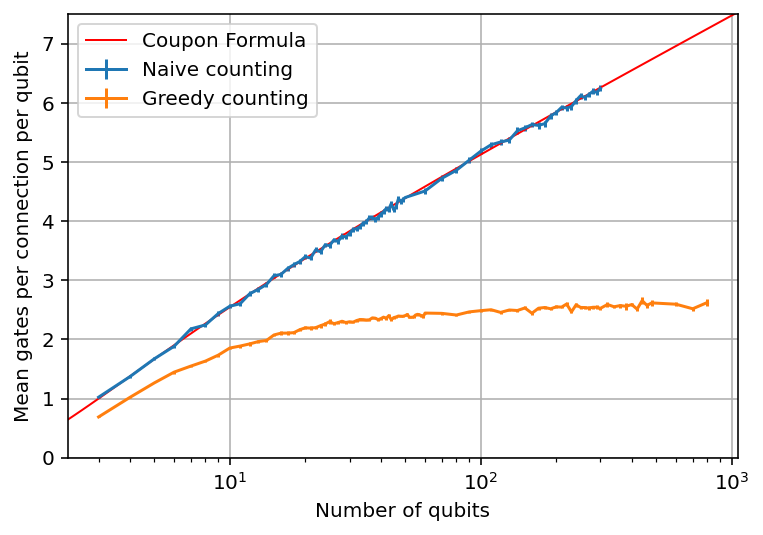}};
        \end{scope}
    \end{tikzpicture}
    \vspace*{-0.4cm}
    \caption{Mean gates per connection per qubit for the linear graph, as counted by the naive and greedy methods. The naive method matches the theoretical prediction, implying $\Theta(n \log n)$ gates per connection. The greedy method, on the other hand, suggests only $\Theta(n)$ gates per connection.}
    \label{fig:greedy_vs_naive}
\end{figure}

The average connection count found by the greedy algorithm does not grow linearly with depth. Early blocks gain relatively few ``free'' gates from their predecessors, and so the connection count grows relatively slowly initially. At large depths it asymptotes to a constant growth rate, as illustrated in Figure \ref{fig:connections_vs_gates}.

\begin{figure}[h]
    \centering
    \begin{tikzpicture}
        \begin{scope}
            \node[anchor=north west,inner sep=0] (image_a) at (0,0)
            {\includegraphics[width=0.6\columnwidth]{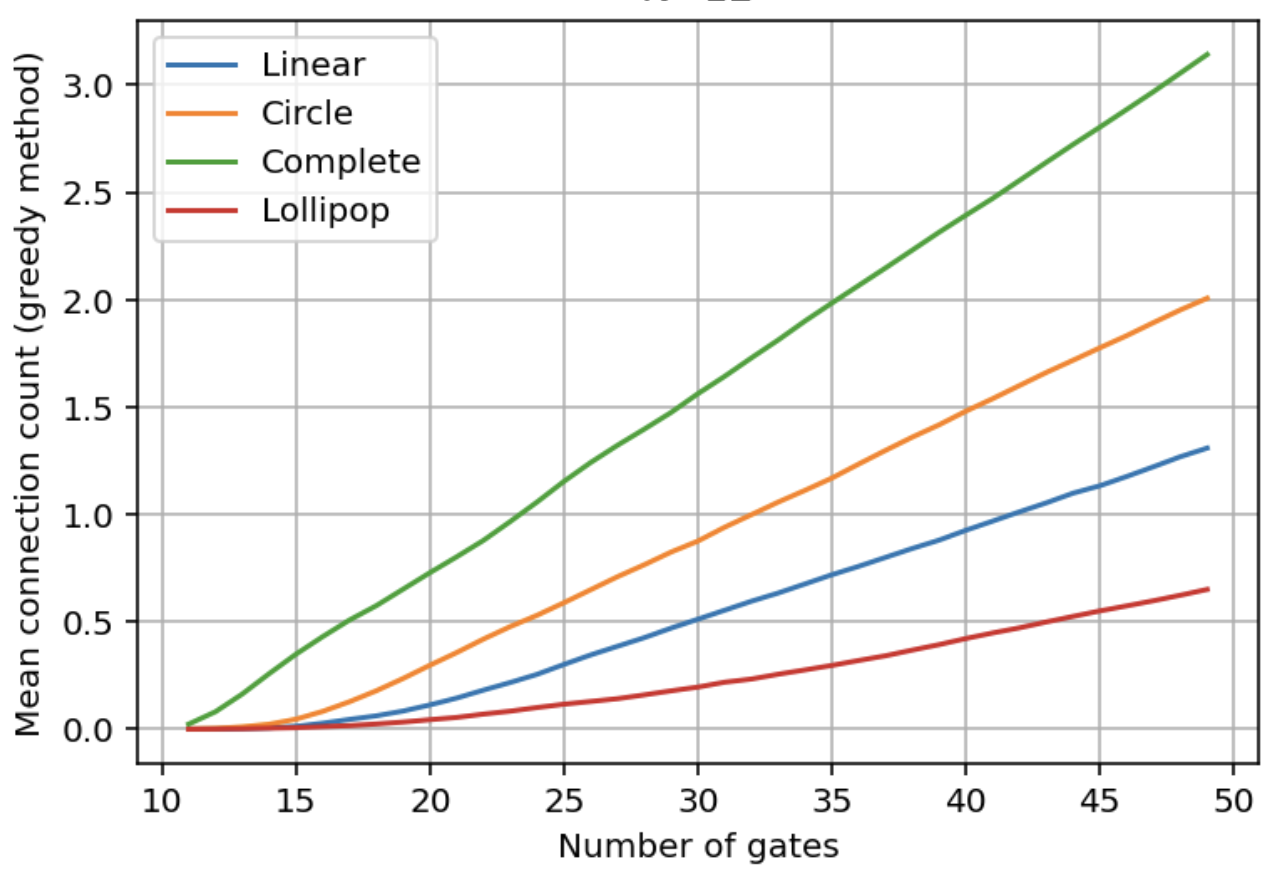}};
        \end{scope}
    \end{tikzpicture}
    \vspace*{-0.4cm}
    \caption{Estimated mean connection count vs. circuit size for each of four graphs on 12 qubits.}
    \label{fig:connections_vs_gates}
\end{figure}

\section{Improving computational complexity}
\label{app:algo_tricks}
Here we list several tricks we used to make the numerical calculations above tractable. 
\paragraph{Choice of basis} Our goal is to evaluate 
\begin{gather}
    \max_{\vec{a}} \bra{\Psi(\vec{a})} \vectorize\left(\Phi_\varepsilon\right)^d \ket{\Psi(\vec{a})}
\end{gather}
numerically. 
For sufficiently small systems, one may work out the transfer matrix $\vectorize \Phi_{\varepsilon}$ explicitly. In the physical basis, this matrix has $q^{2N}$ entries, so it becomes impractical quite quickly. It's more useful to use the permutation basis on one side and the cobasis on the other, i.e. resolve the orthogonal projector into the single-site commutant as $P_\text{comm} = \sum_{\sigma} \ket{\sigma}\bra{\widetilde{\sigma}}$. We then may define
\begin{align}
    H_{\vec{\sigma}, \vec{\tau}} &= \bra{\widetilde{\vec{\sigma}}} \vectorize \Phi_{\varepsilon} \ket{\vec{\tau}}
    \\
    \mathbf{L}_{\vec\sigma}(\vec{a}) &= \braket{\Psi(\vec{a})|\vec{\sigma}}
    \\
    \mathbf{R}_{\vec\sigma}(\vec{a}) &= \braket{\widetilde{\vec{\sigma}}|\Psi(\vec{a})}
\end{align}
so that 
\begin{gather}
    \bra{\Psi(\vec{a})} \vectorize\left(\Phi_\varepsilon\right)^d \ket{\Psi(\vec{a})} = \mathbf{L}^T(\vec{a}) H^d \mathbf{R}(\vec{a})
\end{gather}
Note that even though it comes from a Hermitian operator, $H$ is not a symmetric matrix when expressed in this non-orthogonal basis. 

\paragraph{Tracking fewer irreps} Our goal is to find an $\epsilon$-approximate $2$-design depth. On option is to work out the formula above for all $\vec{a}$, increasing $d$ until the maximum is reached. However, this requires $2^N$ choices of $\vec{a}$, and only a few will contribute to the maximum anywhere. A better strategy is to observe that for each $\vec{a}$, the quadratic form above is a monotonically decreasing function of $d$. For most choices of $\vec{a}$ it will decrease very rapidly, so we need to consider only very shallow circuits. For any given $\epsilon$, typically there are only a few choices of boundary state which need to be carried to large depth. 

\paragraph{State representation} A second observation is that it is typically not necessary to evaluate the $4^N$ entries of $H$ explicitly. Usually one can instead compute the action of $H$ on a desired vector directly, storing either $2^N$ vector elements for an unstructured vector or some structured tensor-network representation of the state.

\paragraph{Interpolation} For the brickwork, the $\epsilon$-approximate $2$-design depth is a discrete quantity. This makes finite-$n$ behavior rather messy. To give more useful insight into the actual scrambledness, we use interpolation on $\log \epsilon$ to obtain a continuous value. We do the same for the fast architectures. 

\paragraph{Tricks for graphs}
Consider the case of Haar-random 2-site unitaries acting on sites edges sampled uniformly from the edges of some graph. In this case the transfer matrix may be expressed as 
\begin{gather}
    \frac{1}{|E|} \sum_{(i,j) \in E} G_{ij} \otimes I_{d-2}
\end{gather}
where $G_{ij}$ is the local moment operator corresponding to a Haar-random gate acting on sites $i$ and $j$. 
Since each gate is small, it's easy to compute $G_{ij}\ket{\psi}$ for each choice of edge and then sum. This has runtime $O(2^N |E|)$. 

In addition, graphs often have symmetries. Any two choices of $\vec{a}$ which are related to each other by an automorphism of the graph will give the same contribution to the multiplicative error, so we need choose only one representative for each automorphism class.

\paragraph{Tricks for brickwork}
For a fixed arrangement of Haar-random unitaries, there's an additional simplification which halves the effective system size. Rather than being invariant under single-site unitaries, our measure is now invariant under two-site unitaries on those adjacent sites paired by the circuit. It follows that the commutant into which we are projected is of dimension $(t!)^{N/2}$ instead of $(t!)^{N}$. This corresponds to a singular value decomposition of each two-site local moment operator.

\paragraph{Tricks for fast architectures}
The fast architectures studied each have permutation symmetry, so that only $O(n)$ distinct choices of $\vec{a}$ must be considered. For these architectures the number possible configurations of each layer grows factorially with $n$, and so an exact evaluation of the moment operator is not tractable. We instead sample over circuit realizations and average together the results. This gives a consistent estimator for the true multiplicative error. The PBFE also has sites which are paired in a predictable way. This allows us to represent the state with a tensor with only $2^{n/2}$ elements, which makes computations tractable for twice as many qubits. 

\end{appendices}
\end{document}